\title{Sequential Decision Making in Stochastic Games with Incomplete Preferences over Temporal Objectives}
\author {
    % Authors
    Abhishek Ninad Kulkarni\textsuperscript{\rm 1},
    Jie Fu \textsuperscript{\rm 2},
    Ufuk Topcu \textsuperscript{\rm 1}
}
\acrodef{mdp}[MDP]{Markov decision process}
\acrodef{pomdp}[POMDP]{Partially Observable Markov Decision Process}
\acrodef{momdp}[MOMDP]{Multi-objective MDP}
\acrodef{dfa}[DFA]{deterministic finite automaton}
\acrodef{tlmdp}[TLMDP]{terminating labeled Markov decision process}
\acrodef{lmdp}[LMDP]{labeled Markov decision process}
\acrodef{pdfa}[PDFA]{preference deterministic finite automaton}
\acrodef{pdra}[PDRA]{preference deterministic Rabin automaton}
\acrodef{cpltlf}[CPLTL$_f$]{Conditional Preference over LTL$_f$}
\acrodef{cpa}[CPA]{Conditional Preference Automaton}
\acrodef{ltl}[LTL]{linear temporal logic}
\acrodef{ltlf}[LTL$_f$]{linear temporal logic over finite traces}
\acrodef{ndaswin}[ND-ASWin]{non-dominated almost-sure winning}
\newtheorem{lemma}{Lemma}
\newtheorem{proposition}{Proposition}
\newtheorem{theorem}{Theorem}
\theoremstyle{definition}
\newtheorem{definition}{Definition}
\newtheorem{problem}{Problem}
 \newtheorem{example}{Example}
\newcommand{\refAlg}[1]{Algorithm~\ref{#1}}
\newcommand{\refDef}[1]{Definition~\ref{#1}}
\newcommand{\refFig}[1]{Figure~\ref{#1}}
\newcommand{\refLma}[1]{Lemma~\ref{#1}}
\newcommand{\refProp}[1]{Proposition~\ref{#1}}
\newcommand{\refThm}[1]{Theorem.~\ref{#1}}
\newcommand{\ie}{i.e.}
\newcommand{\dist}{\mathcal{D}}
\newcommand{\supp}{\mathsf{Supp}}
\newcommand{\rank}{\mathsf{rank}}
\newcommand{\last}{\mathsf{Last}}
\newcommand{\calA}{\mathcal{A}}
\newcommand{\calE}{\mathcal{E}}
\newcommand{\truev}{\mathsf{true}}
\newcommand{\Always}{\Box \, }
\newcommand{\Eventually}{\Diamond \, }
\newcommand{\Next}{\bigcirc \, }
\newcommand{\until}{\mbox{$\, {\sf U}\,$}}
\newcommand{\weakpref}{\trianglerighteq}
\newcommand{\strictpref}{\triangleright}
\newcommand{\Paths}{\mathsf{Paths}}
\newcommand{\trace}{L}
\newcommand{\Cone}{\mathsf{Cone}}
\newcommand{\aswin}{\mathsf{ASWin}}
\newcommand{\pa}{\mathcal{P}}
\newcommand{\maxRank}{\mathsf{MaxRank}}
\newcommand{\maximal}{\mathsf{Max}}
\newcommand{\minimal}{\mathsf{Min}}
\begin{document}

\maketitle

\begin{abstract}
Ensuring that AI systems make strategic decisions aligned with the specified preferences in adversarial sequential interactions is a critical challenge for developing trustworthy AI systems, especially when the environment is stochastic and players' incomplete preferences leave some outcomes unranked. We study the problem of synthesizing preference-satisfying strategies in two-player stochastic games on graphs where players have opposite (possibly incomplete) preferences over a set of temporal goals. We represent these goals using linear temporal logic over finite traces (LTLf), which enables modeling the nuances of human preferences where temporal goals need not be mutually exclusive and comparison between some goals may be unspecified. We introduce a solution concept of non-dominated almost-sure winning, which guarantees to achieve a most preferred outcome aligned with specified preferences while maintaining robustness against the adversarial behaviors of the opponent. Our results show that strategy profiles based on this concept are Nash equilibria in the game where players are risk-averse, thus providing a practical framework for evaluating and ensuring stable, preference-aligned outcomes in the game. Using a drone delivery example, we demonstrate that our contributions offer valuable insights not only for synthesizing rational behavior under incomplete preferences but also for designing games that motivate the desired behavior from the players in adversarial conditions.
\end{abstract}

% \begin{links}
%     % \link{Code}{https://aaai.org/example/code}
%     % \link{Datasets}{https://aaai.org/example/datasets}
%     % \link{Extended version}{https://aaai.org/example/extended-version}
% \end{links}

\section{Introduction}
\label{sec:introduction}
Sequential decision-making in adversarial environments is an important problem when designing trustworthy AI systems, especially when the environment is stochastic and preferences are incomplete \cite{wing2021trustworthy,dalrymple2024towards}.
This approach enables AI systems to make strategic decisions over time and adapt to changing circumstances to achieve the best possible outcome while remaining robust to the opponent's behavior.

We study sequential decision-making in two-player stochastic games on graphs, with the goal of synthesizing strategies for each player that align with their individual preferences, where these preferences are adversarial and potentially incomplete.
A game on graph \cite{gradel2003automata} is a widely studied model in computer science for verification \cite{pnueli1993probabilistic,baier2008principles}, synthesis \cite{chatterjee2012survey} and testing \cite{blass2006play} of reactive systems.

We specify player preferences in the game as a preorder on a set of temporal goals defined using \ac{ltlf} \cite{de2013linear}.
This representation enables us to express preferences formally using an English-like language \cite{finucane2010ltlmop} and possibly be extracted from human language \cite{liu2022lang2ltl,brunello2019synthesis}. 

While much of the existing literature on games with preferences has studied normal-form games \cite{bade2005nash,bosi2012continuous,evren2011incomplete,kokkala2019rationalizable,sasaki2019rationalizability}, there is a growing interest in studying sequential decision-making within these games. Recently, games on graphs have been studied under lexicographic preferences \cite{chatterjee2023stochastic}. 
To the best of our knowledge, the problem has not been studied for the class of incomplete preferences, which subsumes both complete and lexicographic preferences.

The synthesis problem presents two key challenges. First, human preferences over temporal goals are often combinative \cite{sep-preferences}, meaning the alternatives over which the preferences are specified are not mutually exclusive. 
For example, a cleaning robot may have a preference for ``cleaning the living room" over ``cleaning the bedroom," but if the battery allows, the robot could clean both rooms, satisfying both goals. 
Decision-making with combinative preferences is challenging because it requires simultaneously evaluating the possibility of satisfying various subsets of alternatives.
However, existing literature on games on graphs with preferences \cite{chatterjee2023stochastic} frequently assumes that alternatives are exclusive, leaving the strategic planning of combinative preferences over temporal goals in stochastic games largely unexplored.

Second, human preferences over temporal goals are often incomplete because it is difficult to specify comparisons between all possible subsets of goals \cite{barbera1984extending,dalrymple2024towards}. 
Synthesizing strategies in the presence of such incomparability is challenging because the utility-based approaches for rational decision-making are inapplicable \cite{sen1997maximization}. 
Although recent research has studied sequential decision-making in \ac{mdp} with incomplete preferences \cite{li2020probabilistic,kulkarni2022opportunistic,rahmani2023probabilistic}, this problem remains underexplored in the context of stochastic games on graphs involving two or more non-cooperative players.

\textbf{Contributions.}
This paper makes fundamental contributions to game theory and formal methods by introducing a novel automata-theoretic approach for strategy synthesis in two-player stochastic games with adversarial preferences.  
Our method provides formal guarantees in line with the goals of trustworthy AI \cite{tegmark2023provably} and formally verified AI \cite{seshia2022toward}, while shifting the emphasis from verification to the synthesis of robust and correct-by-construction strategies.
Our key contributions are as follows.

\begin{enumerate}
    \item \textbf{Solution Concept.} We introduce a solution concept called \emph{\ac{ndaswin}}, designed to achieve the most-preferred outcome for a player while remaining robust to various opponent behaviors. 
    This concept builds on the established notion of almost-sure winning in stochastic games with a single temporal goal \cite{de2007concurrent}, extending it to games with incomplete preferences.

    \item \textbf{Scalar Metric.}
    Since traditional models of incomplete preferences do not admit a utility representation \cite{sen1997maximization}, multi-utility (vector-based) representations \cite{ok2002utility} are employed to study these preferences. However, in our context, using this representation requires solving multi-objective stochastic games, which is computationally hard \cite{chen2013synthesis}.
    
    To this end, we propose a scalar metric called \emph{rank} based on the undominance principle \cite{sen1997maximization} such that any outcome with a lower rank is no worse than that with a higher rank. 

    Although rank does not capture the full complexity of incomplete preferences, we show that it is sufficient to synthesize \ac{ndaswin} strategies.

    \item \textbf{Nash equilibrium.}
    We show that any pair of \ac{ndaswin} strategies of the players constitutes a Nash equilibrium.
    This result shows how a weaker solution concept can characterize Nash equilibrium in games with incomplete preferences.

\end{enumerate}

Using a drone delivery scenario, we demonstrate that our results are particularly useful not only in computing strategies aligned with preference specification but also in designing games that motivate the desired behavior from the players given incomplete preferences. 

\section{Preliminaries}
\label{sec:preliminaries}

\textbf{Notations.} 
The set of all finite (resp., infinite) words over a finite alphabet $\Sigma$ is denoted $\Sigma^\ast$ (resp., $\Sigma^\omega$). 
The empty string is denoted as $\epsilon$ and the set of non-empty strings is denoted by $\Sigma^+$.
We denote the set of all probability distributions over a finite set $X$ by $\dist(X)$.
Given a distribution $\mathbf{d}\in \dist(X)$, the probability of an outcome $x \in X$ is denoted $\mathbf{d}(x)$. 

Given a countable set $U$, a preference relation $\succeq$ on $U$ is \emph{preorder} on $U$. 
An element $u \in U$ is called \emph{maximal} if there is no $v \in U$ such that $u \succeq v$, and it is called \emph{minimal} if there is no $v \in U$ such that $v \succeq u$.
The sets of all maximal and minimal elements in $U$ are denoted by $\maximal(U, \succeq)$ and $\minimal(U, \succeq)$, respectively.

\subsection{Game Model}

\begin{definition}
	\label{def:dtptb-game}
    A stochastic two-player concurrent game on graph is a tuple, $G = \langle S, A, T, s_0, AP, L \rangle,$
    where
    \begin{inparaenum}[]
        \item $S$ is a set of states. 
        \item $A = A_1 \times A_2$ is a set of actions, where $A_1, A_2$ represents the set of actions of P1 and P2, respectively.
        \item $T: S \times A \rightarrow \dist(S)$ is a probabilistic transition function. Given any two states $s, s' \in S$ and any action $a \in A$, $T(s, a, s')$ denotes the probability that the game transitions from state $s$ to $s'$ when action $a$ is chosen at $s$.
        \item $s_0 \in S$ is an initial state.
        \item $AP$ is a set of atomic propositions.
        \item $L: S \rightarrow 2^{AP}$ is a labeling function that maps every state $s \in S$ to the set of atomic propositions $L(s) \subseteq AP $ that hold true at $s$.
    \end{inparaenum}
\end{definition}

A \emph{path} in a game $G$ is a sequence of states $\rho = s_0 s_1 s_2 \ldots$ such that, for every $i \geq 0$, there exists an action $a_i \in A$ such that $T(s_i, a_i, s_{i+1}) > 0$. 
The path is said to be \emph{finite} if it terminates after a finite number of steps, otherwise it is \emph{infinite}.
The last state of a finite path $\rho$ is denoted by $\last(\rho)$. 
The set of all finite paths in $G$ is denoted by $\Paths(G)$ and that of infinite paths is denoted by $\Paths_\infty(G)$.
Every finite path $\rho$ induces a finite word $\trace(\rho) = L(s_0) L(s_1) \ldots L(s_k) \in (2^{AP})^*$ called the \emph{trace} of $\rho$.

A strategy in $G$ is a function $\pi: S^+ \rightarrow \dist(A)$ that maps every finite path in $\Paths(G)$ to a probability distribution over the action set $A$. 
A strategy $\pi$ is called \emph{memoryless} if for any two paths $\rho s, \rho' s \in \Paths(G)$, we have $\pi(\rho s) = \pi(\rho' s)$. 
Otherwise, $\pi$ is called a \emph{forgetful} strategy \cite{gradel2003automata}. 
A strategy $\pi$ is called \emph{deterministic} if, for any path $\rho \in \Paths(G)$, $\pi(\rho)$ is a Dirac delta distribution. 
Otherwise, $\pi$ is said to be a \emph{randomized} strategy.

The temporal goals considered in this paper are interpreted over finite traces.
Therefore, throughout this paper, we restrict the set of strategies to contain \emph{proper strategies}, which only produce finite paths.

\begin{definition}[Proper Strategy]
	A strategy $\pi: S^+ \rightarrow A$ is said to be \emph{proper} if, for every infinite path $s_0 s_1 \ldots \in \Paths_\infty(M)$, there exists an integer $n \geq 0$ such that $\pi(s_0 s_1 \ldots s_n)$ is undefined.
\end{definition}

A \emph{strategy profile} $(\pi_1, \pi_2)$ is a tuple of strategies for each player in $G$. 
A path $\rho = s_0 s_1 s_2 \ldots s_k \in \Paths(G)$ is said to be \emph{consistent} with a strategy profile $(\pi_1, \pi_2)$ if, for every non-negative integer $i < k$, there exist $a_1 \in \supp(\pi_1(s_i))$ and $a_2 \in \supp(\pi_2(s_i))$ such that $T(s_i, (a_1, a_2), s_{i+1}) > 0$. 
Given a path $\rho \in \Paths(G)$, the \emph{cone} of $G$ defined by the strategy profile $(\pi_1, \pi_2)$ is the set, $\Cone(\rho, \pi_1, \pi_2) = \{\rho\rho' \in \Paths(G) \mid \rho \rho' \text{ is consistent with } (\pi_1, \pi_2) \}$.

\subsection{Specifying Temporal Goals}
\label{sec:ltlf}

The temporal goals of players in the game $G$ are specified formally using temporal logic formulas interpreted over finite traces \cite{de2013linear}. 

\begin{definition}
Given a set of atomic propositions $AP$, a \ac{ltlf}  is produced by the following grammar:
\[
    \varphi :=  p \mid \neg \varphi \mid \varphi  \land \varphi  \mid \Next \varphi \mid \varphi  \until \varphi,
\]
made of atomic propositions $p \in AP$, the standard Boolean operators $\neg$ (negation) and $\land$ (conjunction), as well as temporal operators $\Next$ (``Next'') and $\until$ (``Until'').

\end{definition}

The formula $\Next \varphi$ denotes that $\varphi$ holds at the next time instant, while $\varphi_1 \until \varphi_2$ means $\varphi_2$ holds at some future instant, and $\varphi_1$ holds at all preceding instants. From these operators, the temporal operators $\Eventually$ (``Eventually'') and $\Always$ (``Always'') are derived: $\Eventually \varphi := \truev \until \varphi$ signifies that $\varphi$ holds at some future instant, and $\Always \varphi := \neg \Eventually \neg \varphi$ indicates $\varphi$ holds at the current and all future instants. For formal semantics of \ac{ltlf}, see \cite{de2013linear}.

\subsection{Automata-theoretic Planning with \ac{ltlf}}

A P1 strategy $\pi_1$ is said to be \emph{almost-sure winning} to satisfy an \ac{ltlf} formula $\varphi$ if, for any P2 strategy $\pi_2$, any path $\rho \in \Cone(s_0, \pi_1, \pi_2)$ satisfies $\varphi$ with probability one.
The automata-theoretic approach to synthesizing an almost-sure winning strategy in $G$ leverages the fact that every \ac{ltlf} formula over $AP$ defines a regular language over the alphabet $\Sigma = 2^{AP}$.
Such a regular language can be represented using a finite automaton \cite{de2013linear}.
\begin{definition}
	A \ac{dfa} is a tuple $\calA = \langle Q, \Sigma, \delta, q_0, F \rangle$ where 
 	$Q$ is a finite state space.
 	$\Sigma$ is a finite alphabet. 
 	$\delta: Q \times \Sigma \rightarrow Q$ is a deterministic transition function.
    $q_0\in Q$ is an initial state, and $F\subseteq Q$ is a set of accepting (final) states.
\end{definition}

Given a stochastic game and a \ac{dfa}, the almost-sure winning strategy is computed by applying \cite[Algorithm~3]{de2007concurrent} on their product \cite{baier2008principles}: 
${\cal G} = \langle V, A, \Delta, v_0, \mathcal{F} \rangle,$ where 
$V = S \times Q$ is the set of states. 
$\Delta: V \times A \rightarrow \dist(V)$ is the probabilistic transition function such that, for any states $(s, q), (s', q') \in V$ and $a \in A$, we have $\Delta((s, q), a, (s', q') = T(s, a, s')$ whenever $\delta(q, L(s')) = q'$ and $\Delta((s, q), a, (s', q') = 0$ otherwise. 
$\mathcal{F} = S \times F$ is a set of final states.
$v_0 = (s_0, L(s_0))$ is the initial state.

\subsection{Preference Modeling}

We represent preferences over temporal goals as a binary relation $\weakpref$ on a set of \ac{ltlf} formulas $\Phi$ \cite{rahmani2024preference}. 
Given two \ac{ltlf} formulas, $\varphi_1, \varphi_2$, we write $\varphi_1 \weakpref \varphi_2$ to state that satisfying $\varphi_1$ is weakly preferred to satisfying $\varphi_2$.
We write $\varphi_1 \strictpref \varphi_2$ to state that satisfying $\varphi_1$ is strictly preferred to satisfying $\varphi_2$.

Every preference relation $\weakpref$ on $\Phi$ induces a preorder on $\Sigma^*$, which can be represented using a preference automaton.

\begin{definition}
	\label{def:preference-automaton}
	A preference automaton is defined as a tuple,
	$\pa= \langle Q, \Sigma, \delta, q_0, E \rangle,$
	where
	\begin{inparaenum}[]
		\item $Q$ is a finite set of states.,
		\item $\Sigma$ is the alphabet,
		\item $\delta: Q \times \Sigma \rightarrow Q$ is a deterministic transition function,
		\item $q_0 \in Q$ is the initial state, and
		\item $E \subseteq Q \times Q$ is a preorder on $Q$.
	\end{inparaenum}
\end{definition}

\refDef{def:preference-automaton} augments the semi-automaton $\langle Q, \Sigma, \delta, q_0 \rangle$ with the preference relation $E$, instead of a set of accepting states as is typical with a \ac{dfa}.
We write $q \succeq_E q'$ to denote that state $q$ is weakly preferred to $q'$ under preorder $E$.

The preference automaton encodes a preference relation $\succeq$ on $\Sigma^*$.
Given $w, w' \in \Sigma^*$, let $q, q' \in Q$ be the two states such that $q = \delta(q_0, w)$ and $q' = \delta(q_0, w')$.
Then, the following statements hold \cite[Theorem~1]{rahmani2024preference}. 
\begin{inparaenum}[(a)]
    \item If $(q, q') \in E$ and $(q', q) \notin E$, then $w \succ w'$;
    \item If $(q, q') \notin E$ and $(q', q) \in E$, then $w' \succ w$;
    \item If $(q, q') \in E$ and $(q', q) \in E$, then $w \sim w'$;
    \item If $(q, q') \notin E$ and $(q', q) \notin E$, then $w \parallel w'$.
\end{inparaenum}
The procedure to construct a preference automaton is enlisted in \cite{rahmani2023probabilistic} and an online tool is available at \textcolor{blue}{\url{https://akulkarni.me/prefltlf2pdfa.html}}.

\section{Problem Formulation}
\label{sec:problem_stmt}
In a stochastic game where players have adversarial and possibly incomplete preferences, an important decision problem is determining the most desirable outcome that a player can achieve, regardless of the strategy followed by the opponent.
We introduce a new solution concept to study these games  called \acf{ndaswin}.
This concept extends the solution concept of almost-sure winning---traditionally used for qualitative analysis of stochastic games with a single temporal objective \cite{de2007concurrent}---to study the games with preferences over a set of temporal objectives.

First, we define the notion of a dominating strategy in the context of stochastic games.
Let $\succeq_1$ be a preorder on $\Sigma^*$ induced by P1's preference $\weakpref_1$ on a set of \ac{ltlf} objectives, $\Phi$. 

\begin{definition}[Dominating Strategy Profile]
    \label{def:dominating-stategy}
    Let $(\pi_1, \pi_2)$ and $(\pi_1', \pi_2')$ be two strategy profiles in $G$. 
    We say $(\pi_1, \pi_2)$ \emph{strictly dominates} $(\pi_1', \pi_2')$ if and only if the following conditions hold:
    \begin{enumerate}
        \item For any $\rho \in \minimal(\Cone(s_0, \pi_1, \pi_2), \succeq_1)$ and any $\rho' \in \minimal(\Cone(s_0, \pi_1', \pi_2'), \succeq_1)$, we have $L(\rho') \not\succ_1 L(\rho)$.
        
        \item There exists $\rho \in \minimal(\Cone(s_0, \pi_1, \pi_2), \succeq_1)$ and $\rho' \in \minimal(\Cone(s_0, \pi_1', \pi_2'), \succeq_1)$ such that $L(\rho) \succ_1 L(\rho')$.
    \end{enumerate}
\end{definition}
 
Intuitively, $(\pi_1, \pi_2)$ \emph{strictly dominates} $(\pi_1', \pi_2')$ if (1) none of the least preferred paths produced under the strategy profile $(\pi_1', \pi_2')$ is strictly preferred for P1 to any least preferred path produced under $(\pi_1, \pi_2)$, and (2) there exists a least preferred path produced under the strategy profile $(\pi_1, \pi_2)$ that P1 strictly prefers to some least preferred path produced under $(\pi_1', \pi_2')$.

\begin{definition}[\ac{ndaswin}]
    \label{def:maximal-aswin}
    A P1 strategy $\pi_1$ is said to be \emph{\ac{ndaswin}} for P1 if, for any P2 strategy $\pi_2$, there is no P1 strategy $\pi_1'$ such that  $(\pi_1', \pi_2)$ strictly dominates $(\pi_1, \pi_2)$.
\end{definition}

A \ac{ndaswin} strategy for P1 guarantees the best worst-case outcome for P1, regardless of the strategy followed by P2.
A \ac{ndaswin} strategy for P2 is defined analogously.

In this paper, we assume that P1 and P2 have adversarial preferences: If P1 prefers satisfying an \ac{ltlf} formula $\varphi$ over $\varphi'$, then P2 prefers satisfying $\varphi'$ over $\varphi$.

\begin{problem}
    Given a stochastic game $G$, P1's preference relation $\weakpref_1$ and P2's adversarial preference relation  $\weakpref_2$, both defined on a shared set of \ac{ltlf} formulas $\Phi$, synthesize a \ac{ndaswin} strategy for P1 and P2.
\end{problem}

\section{Main Results}
\label{sec:theory}
\subsection{Product Game}

We follow an automata-theoretic approach to synthesize the \acf{ndaswin} strategies for P1 and P2. 
This approach enables us to transform a preference relation over \ac{ltlf} objectives to a preference relation over the states of the product game defined below.

\begin{definition}
	\label{def:product-game}
	Given a game $G$, a set of \ac{ltlf} objectives $\Phi$, and player preferences $\weakpref_i$ for $i = 1, 2$, let $\pa_i = (Q, \Sigma, \delta, q_{0}, E_i)$ be the preference automata corresponding with the relation $\weakpref_i$. The product game is a tuple,
	\begin{align*}
		H = \langle V, A, \Delta, v_0, \calE_1, \calE_2 \rangle,
	\end{align*}
	where
	\begin{inparaenum}[]
		\item $V = S \times Q$ is the set of states.  
		\item $A$ is the set of actions.
		\item $\Delta: V \times A \rightarrow \dist(V)$ is a probabilistic transition function.
		Given two states $v = (s, q)$ and $v' = (s', q')$ in $V$ and an action $a \in A$, we have $\Delta(v, a, v') = T(s, a, s')$ if $q' = \delta(q, L(s'))$ and $\Delta(v, a, v') = 0$, otherwise.
		\item $v_0 = (s_0, \delta(q_{0}, L(s_0)))$ is the initial state.
		\item For $i = 1, 2$, $\calE_i$ is a preorder on $V$ such that $(s, q)$ is weakly preferred to $(s', q')$ under $\calE_i$, denoted $(s, q) \succeq_{\calE_i} (s', q')$, if and only if $q \succeq_{E_i} q'$.
	\end{inparaenum}
\end{definition}

The preorder $\calE_i$ is called a lifting of the relation $E_i$ \cite{maly2020lifting,barbera1984extending}. 
That is, player-$i$ prefers a state $(s, q)$ to $(s', q')$ in the product game if $q$ is preferred to $q'$ under player-$i$'s preference automaton $\pa_i$.

Every path $\rho = s_0 s_1 \ldots s_n$ in $G$ induces a unique path $\varrho = v_0 v_1 \ldots v_n$ in $H$ where, for all $j = 0, \ldots, n$, $v_j = (s_i, q_j)$ and $q_j = \delta(q_0, L(s_0 s_1 \ldots s_j))$. 
We call this path $\varrho$ as the \emph{trace} of $\rho$ in $H$.

\begin{proposition}
	\label{prop:pref(paths in G)-to-pref(states-in-H)}
	Given any finite paths $\rho, \rho'$ in $G$, let $\varrho, \varrho'$ be their traces in $H$.
	Then, $L(\rho) \succeq_1 L(\rho')$ holds if and only if $\last(\varrho) \succeq_{\calE_1} \last(\varrho')$.
\end{proposition}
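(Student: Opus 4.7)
The plan is to prove the biconditional by chaining equivalences obtained from unfolding the three relevant definitions: (i) the preference relation $\succeq_1$ on $\Sigma^\ast$ induced by the preference automaton $\pa_1$; (ii) the construction of the product game $H$ together with the trace $\varrho$; and (iii) the lifting that defines $\calE_1$ from $E_1$. Each of these contributes one equivalence, and composing them yields the statement.

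First, I would establish the key auxiliary fact: if $\rho = s_0 s_1 \ldots s_n$ is a finite path in $G$ with trace $\varrho = v_0 v_1 \ldots v_n$ in $H$ and $v_j = (s_j, q_j)$, then $q_n = \delta(q_{0}, L(s_0) L(s_1) \ldots L(s_n))$, i.e., $q_n = \delta(q_0, L(\rho))$. This is a straightforward induction on the length of $\rho$: the base case $n = 0$ follows directly from $v_0 = (s_0, \delta(q_0, L(s_0)))$ in \refDef{def:product-game}, and the inductive step follows from the product transition rule $\Delta(v, a, v') > 0$ only if $q' = \delta(q, L(s'))$, together with the determinism of $\delta$. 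The same statement applies to $\rho'$ and its trace $\varrho'$, giving $q'_m = \delta(q_0, L(\rho'))$ where $\last(\varrho') = (s'_m, q'_m)$.

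Next, I would apply the characterization of the preference automaton recalled after \refDef{def:preference-automaton}: for any $w, w' \in \Sigma^\ast$, $w \succeq_1 w'$ holds if and only if $\delta(q_0, w) \succeq_{E_1} \delta(q_0, w')$. Taking $w = L(\rho)$ and $w' = L(\rho')$ and substituting the identities from the previous step yields
\[
L(\rho) \succeq_1 L(\rho') \iff q_n \succeq_{E_1} q'_m.
\]
Finally, the lifting clause of \refDef{def:product-game} gives $(s_n, q_n) \succeq_{\calE_1} (s'_m, q'_m)$ iff $q_n \succeq_{E_1} q'_m$, and since $\last(\varrho) = (s_n, q_n)$ and $\last(\varrho') = (s'_m, q'_m)$, this completes the chain of equivalences.

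I do not anticipate any real obstacle; the result is essentially a bookkeeping check. The only point that deserves care is the off-by-one behavior at the initial state: because $v_0 = (s_0, \delta(q_{0},L(s_0)))$ already consumes $L(s_0)$, the automaton component at step $j$ is $\delta(q_0, L(s_0)\ldots L(s_j))$ rather than $\delta(q_0, L(s_1)\ldots L(s_j))$. Getting the base case of the induction right ensures the equivalence lines up cleanly with the preference-automaton characterization of $\succeq_1$.
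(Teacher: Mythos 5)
Your proof is correct and takes essentially the same route as the paper's: unfold the preference-automaton characterization of $\succeq_1$ and the lifting clause defining $\calE_1$, and chain the equivalences. The only difference is that you make explicit (via induction) the bookkeeping identity $q_n = \delta(q_0, L(\rho))$, which the paper's proof uses implicitly.
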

\begin{proof}
    Suppose that $L(\rho) \succeq_1 L(\rho')$ holds. 
    By \cite[Thm.~1]{rahmani2024preference}, we have $q \succeq_{E_1} q'$, where $q = \delta(q_0, L(\rho))$ and $q' = \delta(q_0, L(\rho'))$. 
    Then, for $(s, q) = \last(\varrho)$ and $(s', q') = \last(\varrho')$, it must be the case that $(s, q) \succeq_{\calE_1} (s', q')$ because $\calE_1$ is a lifting of the relation $E_1$.
    The proposition is established by observing that the converse of each aforementioned statement is true by definition. 
\end{proof}

Recall that an outcome in $G$ is the path generated in the game when P1 and P2 follow their chosen strategies. 
Following \refProp{prop:pref(paths in G)-to-pref(states-in-H)}, the outcome in $H$ can be understood as the last state of the trace of this path in $H$. 
Hence, to compare two paths in $G$ under preference relation $\succeq_i$ amounts to comparing the last states of the traces of two paths under $\calE_i$.

\subsection{Ranks: A Measure of Quality of Outcome}

In this subsection, we introduce a scalar metric called \emph{rank}, a key contribution of this paper, to measure the quality of an outcome in $H$. 
Unlike standard approaches to representing incomplete preferences---which rely on multi-utility representations \cite{rahmani2024preference} and are computationally hard \cite{chen2013stochastic}---rank offers a simplified and computationally efficient representation. 
While it is not intended to replace the multi-utility approach, we show that it is sufficient to synthesize \ac{ndaswin} strategies.

\emph{Rank} is an integer-valued metric that compare two states in $H$ under a preference relation $\calE$ based on undominance principle \cite{sen1997maximization}; a state with smaller rank is \emph{no worse than} a state with higher rank.

\begin{definition}[Rank]
	\label{def:rank}
	Given a preorder $\calE$ on $V$, let $Z_0 = \maximal(V, \calE)$ and, for all $k \geq 0$, $Z_{k+1} = \maximal(V \setminus \bigcup\limits_{j=0}^{k} Z_{j}, \calE)$. 
    The rank of any state $v \in V$, denoted by $\rank_\calE(v) = k$, is the smallest integer $k \geq 0$ such that $v \in Z_k$.    
\end{definition}

% Prove every state is assigned a finite rank.
\refDef{def:rank} assigns a unique, finite rank to every state in $V$ given a preorder $\calE$. 
Since the set $\maximal(U, \calE)$ is non-empty for any non-empty subset $U \subseteq V$, the inductive assignment of ranks terminates only when the subset $V \setminus \bigcup\limits_{j=0}^{k} Z_{j}$ is empty, \ie, when a rank has been assigned to all states in $V$.
Additionally, the sets $Z_0, Z_1 \ldots$ are mutually exclusive and exhaustive subsets of $V$. 
Therefore, every state in $V$ has a unique rank under a given preorder.

\begin{proposition}
	\label{prop:rank-comparison}
	The following statements hold for any two states $v, v' \in V$,
	\begin{enumerate}
		\item If $\rank_\calE(v) = \rank_\calE(v')$ then either $v \sim_{\calE} v'$ or $v \parallel_{\calE} v'$.
		\item If $\rank_\calE(v) > \rank_\calE(v')$ then $v \not\succeq_{\calE} v'$.
		\item If $v \succ_{\calE} v'$ then $\rank_\calE(v) < \rank_\calE(v')$.
	\end{enumerate}
\end{proposition}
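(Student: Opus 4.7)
The plan is to prove the three statements in sequence, leveraging the inductive structure of the rank definition. Throughout, I will use the fact that every $v \in V$ is assigned a unique rank because the sets $Z_0, Z_1, \ldots$ are mutually exclusive.

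\textbf{Statement 1.} If $\rank_\calE(v) = \rank_\calE(v') = k$, then both $v$ and $v'$ lie in $Z_k = \maximal(V \setminus \bigcup_{j<k} Z_j, \calE)$. Being maximal in the same set, neither can strictly dominate the other, so $v \not\succ_\calE v'$ and $v' \not\succ_\calE v$. A short case analysis on whether $v \succeq_\calE v'$ or not then yields $v \sim_\calE v'$ in the former case and $v \parallel_\calE v'$ in the latter.

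\textbf{Statement 2.} I will argue by contradiction. Assume $\rank_\calE(v) = k > k' = \rank_\calE(v')$ and $v \succeq_\calE v'$. Since ranks are unique, $v \notin \bigcup_{j \leq k'} Z_j$, so $v$ lies in the set $U = V \setminus \bigcup_{j<k'} Z_j$ in which $v'$ is maximal. I split on the two cases of $v \succeq_\calE v'$: if $v \succ_\calE v'$, then $v'$ is not maximal in $U$, a direct contradiction. If $v \sim_\calE v'$, the key step is an auxiliary lemma: whenever $v'$ is maximal in $U$ and $v \in U$ with $v \sim_\calE v'$, then $v$ is also maximal in $U$. This follows from transitivity: any $w \in U$ with $w \succ_\calE v$ would force $w \succeq_\calE v'$, hence $w \succ_\calE v'$ (else $v' \succeq_\calE w$ together with $v \succeq_\calE v'$ would yield $v \succeq_\calE w$, contradicting $w \succ_\calE v$), which contradicts maximality of $v'$. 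Consequently $v \in Z_{k'}$, contradicting $\rank_\calE(v) = k \neq k'$.

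\textbf{Statement 3.} This follows immediately by combining the contrapositives of Statements 1 and 2. Suppose $v \succ_\calE v'$. Then $v \succeq_\calE v'$, and the contrapositive of Statement 2 gives $\rank_\calE(v) \leq \rank_\calE(v')$. Since $v \succ_\calE v'$ implies $v \not\sim_\calE v'$ and $v \not\parallel_\calE v'$, the contrapositive of Statement 1 gives $\rank_\calE(v) \neq \rank_\calE(v')$. Together, $\rank_\calE(v) < \rank_\calE(v')$.

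The main obstacle is the $v \sim_\calE v'$ subcase of Statement 2: naive reasoning only shows that $v$ cannot be strictly dominated in $U$, but one must carefully use transitivity of $\succeq_\calE$ to rule out even weak domination, thereby placing $v$ in the same maximal layer as $v'$ and contradicting the assumed rank gap. Once this equivalence-preservation lemma is in hand, all three statements become short.
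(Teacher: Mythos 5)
Your proof is correct and follows essentially the same layer-by-layer maximality argument as the paper's own proof. If anything, your treatment of statement 2 is more careful: the paper asserts without justification that $v \succeq_{\calE} v'$ forces $v$ into $Z_k$, whereas you supply the transitivity argument upgrading $w \succ_{\calE} v$ to $w \succ_{\calE} v'$, which is exactly the missing step; your derivation of statement 3 from the first two is likewise cleaner than the paper's ``similar argument as (2).''
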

\begin{proof}
	(1) We first show that when $\rank_\calE(v) = \rank_\calE(v') = k$, neither $v \succ_\calE v'$ nor $v' \succ_\calE v$ can be true. 
	Suppose that $v \succ_{\calE} v'$ is true. 
	Then, by \refDef{def:rank}, both states $v$ and $v'$ must be elements of $Z_k$, which means that $v$ and $v'$ must be elements of the set $\maximal(Y, \calE)$ where $Y = V \setminus (Z_0 \cup Z_1 \cup \ldots \cup Z_{k-1})$. 
	But $v$ and $v'$ cannot both be maximal elements of $Y$ because $v \succ_{\calE} v'$, which contradicts our supposition. 
	A similar argument can be used to establish that $v' \succ_{\calE} v$. 
	If neither $v \succ_{\calE} v'$ nor $v' \succ_{\calE} v$ is true, then it must be the case that either $v \sim_{\calE} v'$ or $v \parallel_{\calE} v'$.

	(2) Let $\rank_\calE(v') = k$. 
	If $\rank_\calE(v) > \rank_\calE(v')$ then, by \refDef{def:rank}, $v$ and $v'$ are both included in the set $Y = V \setminus (Z_0 \cup Z_1 \cup \ldots \cup Z_{k-1})$. 
	If $v \succeq_\calE v'$, then by definition it must be included in $Z_k = \maximal(Y, \calE)$. 
	Since this is not the case, the statement $v \not\succeq_{\calE} v'$ must be true.
	
	(3) The proof follows a similar argument as (2).
\end{proof}

\refProp{prop:rank-comparison} formalizes the connection between preferences between two states and their ranks.
It asserts that two states with equal ranks are either indifferent or incomparable under the given preorder.
It specifies that a state with a higher rank is not preferred to one with a lower rank.
Finally, it establishes that if one state is strictly preferred to another, the former has a strictly smaller rank than the latter.

The converse of statements in \refProp{prop:rank-comparison} do not necessarily hold, primarily due to the possibility of incomparability arising from incomplete preferences. Specifically, the following statements are not valid:
($1'$) If $v \parallel_{\calE} v'$, then $\rank_\calE(v) = \rank_\calE(v')$.
($2'$) If $v \not\succeq_{\calE} v'$, then $\rank_\calE(v) > \rank_\calE(v')$.
($3'$) If $\rank_\calE(v) < \rank_\calE(v')$, then $v \succ_{\calE} v'$.
We provide a counterexample to demonstrate these points.

\begin{example}
    Consider a game with five states $\{v_1, \ldots, v_5\}$ where the state $v_1$ is strictly preferred to $v_2$, $v_3$ is strictly preferred to $v_4$, and $v_5$ is strictly preferred to $v_4$. 
    Following \refDef{def:rank}, the states $v_1, v_3, v_5$ have rank $0$ and the states $v_2, v_4$ have rank $1$. 
    Statement ($1'$) is invalid because $v_2$ and $v_3$ are incomparable but they have different ranks. 
    Statement ($3'$) is invalid because the rank of $v_3$ is smaller than that of $v_3$ but $v_2 \succ_{\calE} v_3$ does not hold since are $v_2$ and $v_3$ incomparable.
    To see the invalidity of statement ($2'$), first note that $v \not\succeq_{\calE} v'$ holds when either $v' \succ_{\calE} v$ or $v \parallel_{\calE} v'$. 
    Now, observe that states $v_3$ and $v_5$ are incomparable but they have the same ranks.  
\end{example}

\subsection{Synthesis of Non-dominated Almost-sure Winning Strategy}

In this subsection, we characterize P1's \ac{ndaswin} strategy in terms of the rank metric and propose a synthesis procedure by leveraging the algorithm to synthesize an almost-sure winning strategy.

We first establish two properties connecting the outcomes under a strategy profile in $H$ with the rank metric.

\begin{lemma}
    \label{lma:rank-and-minimality}
    Given a strategy profile $(\pi_1, \pi_2)$, let $\Omega(\pi_1, \pi_2) = \{v \mid \exists \rho \in \Cone_H(v_0, \pi_1, \pi_2): v = \last(\rho)\}$ be the set of possible outcomes in $H$ under $(\pi_1, \pi_2)$. 
    For any $v \in \Omega(\pi_1, \pi_2)$, if $\rank_1(v) = \max \{\rank_1(v) \mid v \in \Omega(\pi_1, \pi_2)\}$ then $v \in \minimal(\Omega(\pi_1, \pi_2), \calE_1)$. 
\end{lemma}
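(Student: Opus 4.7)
The plan is to prove the lemma by contradiction, leveraging \refProp{prop:rank-comparison}(3), which is the statement that strict preference $v \succ_{\calE} v'$ entails strictly smaller rank $\rank_\calE(v) < \rank_\calE(v')$. Because the statement is purely about the set-theoretic structure of $\Omega(\pi_1, \pi_2) \subseteq V$, the randomness of the product game and the particulars of $(\pi_1, \pi_2)$ do not enter the argument beyond defining the set $\Omega(\pi_1, \pi_2)$.

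First I would fix notation: let $r^\ast = \max\{\rank_1(u) \mid u \in \Omega(\pi_1, \pi_2)\}$ and let $v \in \Omega(\pi_1, \pi_2)$ be any state with $\rank_1(v) = r^\ast$. Suppose, toward contradiction, that $v \notin \minimal(\Omega(\pi_1, \pi_2), \calE_1)$. Then by the definition of a minimal element, there exists some $v' \in \Omega(\pi_1, \pi_2)$ such that $v \succ_{\calE_1} v'$ (i.e., $v$ strictly dominates $v'$ in P1's preorder on $V$). I would then invoke \refProp{prop:rank-comparison}(3) to conclude $\rank_1(v) < \rank_1(v')$, which gives $r^\ast < \rank_1(v')$ with $v' \in \Omega(\pi_1, \pi_2)$, directly contradicting the maximality of $r^\ast$. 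Hence $v$ must belong to $\minimal(\Omega(\pi_1, \pi_2), \calE_1)$.

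I do not anticipate a serious technical obstacle; the argument reduces to a single invocation of \refProp{prop:rank-comparison}(3). The only subtlety worth flagging in the writeup is the semantic convention for ``minimal'': here $v$ is minimal in $\Omega(\pi_1, \pi_2)$ precisely when no element of $\Omega(\pi_1, \pi_2)$ is strictly less preferred than $v$ (equivalently, $v$ does not strictly dominate anything in $\Omega(\pi_1, \pi_2)$), in line with how \refDef{def:dominating-stategy} uses minimal paths to capture the worst-case outcome of a strategy profile. With this convention in place, the negation of $v \in \minimal(\Omega(\pi_1, \pi_2), \calE_1)$ is exactly ``there exists $v' \in \Omega(\pi_1, \pi_2)$ with $v \succ_{\calE_1} v'$,'' which is the hook that lets \refProp{prop:rank-comparison}(3) close the proof.
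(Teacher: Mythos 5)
Your proof is correct and takes essentially the same route as the paper's: a one-step contradiction that negates minimality to obtain some $v' \in \Omega(\pi_1, \pi_2)$ with $v \succ_{\calE_1} v'$ and then applies \refProp{prop:rank-comparison} to contradict the maximality of $\rank_1(v)$. If anything, your use of clause (3) is tighter than the paper's appeal to clause (1), which asserts without justification that the witness $v'$ is minimal and has the same rank as $v$; your reading of ``minimal'' is also the one the paper actually uses, despite the garbled statement of it in the preliminaries.
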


\begin{proof}
    By contradiction. Suppose that $\rank_1(v) = \max \{\rank_1(v) \mid v \in \Omega(\pi_1, \pi_2)\}$ but $v \notin \minimal(\Omega(\pi_1, \pi_2), \calE_1)$. 
    Then, there must exist a state $v' \in \minimal(\Omega(\pi_1, \pi_2), \calE_1)$ such that $\rank_1(v') = \rank_1(v)$ and $v \succ_1 v'$---which contradicts \refProp{prop:rank-comparison}(1). 
    The proposition is thus established.
\end{proof}

The first property formalized by \refLma{lma:rank-and-minimality} states that, for any P2 strategy $\pi_2$, every maximum ranked outcome in the set $\Omega(\pi_1, \pi_2)$ is a minimal outcome in $\Omega(\pi_1, \pi_2)$ under $\calE_1$. 
For convenience of notation, we write 
\begin{align*}
    \maxRank_1(\pi_1, \pi_2) = \max \{\rank_1(v) \mid v \in \Omega(\pi_1, \pi_2)\}
\end{align*}

\begin{lemma}
    \label{lma:undominance}
    Let $\pi_1$ be a P1 strategy such that, for any P2 strategy $\pi_2$, $\maxRank_1(\pi_1, \pi_2) = \min \{\maxRank_1(\pi_1', \pi_2) \mid \pi_1' \in \Pi_1\}$. 
    Then, there is no P1 strategy $\pi_1'$ such that $(\pi_1', \pi_2)$ strictly dominates $(\pi_1, \pi_2)$.
\end{lemma}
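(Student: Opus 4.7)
My plan is to prove the statement by contradiction. Suppose there exists a P1 strategy $\pi_1'$ such that $(\pi_1', \pi_2)$ strictly dominates $(\pi_1, \pi_2)$. I would then aim to show $\maxRank_1(\pi_1', \pi_2) < \maxRank_1(\pi_1, \pi_2)$, contradicting the hypothesis that $\pi_1$ attains the minimum of $\maxRank_1(\cdot, \pi_2)$. Write $M = \maxRank_1(\pi_1, \pi_2)$ and $M' = \maxRank_1(\pi_1', \pi_2)$. I would first invoke Lemma 1 twice to fix witnesses $\tilde{v} \in \minimal(\Omega(\pi_1, \pi_2), \calE_1)$ with $\rank_1(\tilde{v}) = M$, and $\hat{v}' \in \minimal(\Omega(\pi_1', \pi_2), \calE_1)$ with $\rank_1(\hat{v}') = M'$.

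Next, I would translate both conditions of Definition 3 from paths in $G$ into states in the product game $H$ using Proposition 1. Condition (2) supplies some minimal $v' \in \Omega(\pi_1', \pi_2)$ and minimal $v \in \Omega(\pi_1, \pi_2)$ with $v' \succ_{\calE_1} v$; Proposition 2(3) then yields $\rank_1(v') < \rank_1(v) \leq M$, so \emph{some} minimal outcome of $(\pi_1', \pi_2)$ has rank strictly smaller than $M$. Condition (1), applied to $\tilde{v}$ and $\hat{v}'$ (both of which are minimal in their respective outcome sets), gives $\tilde{v} \not\succ_{\calE_1} \hat{v}'$---the rank-worst outcome under $\pi_1$ is not strictly preferred to the rank-worst outcome under $\pi_1'$.

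The final step is to combine these two facts to derive $\rank_1(\hat{v}') < M$, i.e., $M' < M$. I expect this to be the main obstacle. The difficulty is delicate because $\calE_1$ is only a preorder: as the paper notes following Proposition 2, the relation $\tilde{v} \not\succ_{\calE_1} \hat{v}'$ does not by itself bound $\rank_1(\hat{v}')$ from above, since $\tilde{v}$ and $\hat{v}'$ may be incomparable. To close the argument, I would simultaneously leverage condition (1) applied to the rank-maximizing witness $\tilde{v}$ together with the strict rank drop guaranteed by condition (2), and use Proposition 2 to rule out $\rank_1(\hat{v}') \geq M$. This last piece of rank bookkeeping under incompleteness is where the proof needs the most care.
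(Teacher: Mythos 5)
Your setup matches the paper's proof step for step: argue by contradiction, use \refLma{lma:rank-and-minimality} to identify the rank-maximizing outcomes as minimal ones, push condition (2) of \refDef{def:dominating-stategy} through \refProp{prop:pref(paths in G)-to-pref(states-in-H)} and \refProp{prop:rank-comparison}(3) to obtain a strict rank drop $\rank_1(v') < \rank_1(v)$ for one pair of minimal outcomes, and then appeal to the minimality of $\maxRank_1(\pi_1,\pi_2)$. The difference is that the paper stops at that point and simply declares the contradiction, whereas you explicitly flag the remaining step --- upgrading ``some minimal outcome of $(\pi_1',\pi_2)$ has rank below $M$'' to ``$\maxRank_1(\pi_1',\pi_2) < M$'' --- as an unresolved obstacle. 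Since you only sketch a plan for that step rather than carry it out, your proposal contains a genuine gap, and it is precisely the step where the whole argument lives.

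The obstacle is not mere bookkeeping, and the plan you sketch will not close it with the tools available. Condition (2) controls only one minimal outcome $v'$ of $(\pi_1',\pi_2)$; the rank-maximizing outcome $\hat v'$ may be a different minimal element, and condition (1) yields only $\tilde v \not\succ_{\calE_1} \hat v'$, which --- as the paper's own discussion of the invalid converses of \refProp{prop:rank-comparison} makes clear --- imposes no upper bound on $\rank_1(\hat v')$ when $\tilde v \parallel_{\calE_1} \hat v'$. For instance, if $\Omega(\pi_1,\pi_2)=\{v\}$ with $\rank_1(v)=2$ and $\Omega(\pi_1',\pi_2)=\{v',u'\}$ with $v'\succ_{\calE_1} v$, $u'\parallel_{\calE_1} v$, $u'\parallel_{\calE_1} v'$, and $\rank_1(u')=3$, then both conditions of \refDef{def:dominating-stategy} are satisfied while $\maxRank_1(\pi_1',\pi_2)=3>2=\maxRank_1(\pi_1,\pi_2)$, so no conflict with rank-minimality arises. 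Closing the gap therefore requires more than \refProp{prop:rank-comparison}; it needs either an additional argument ruling out such incomparable high-rank outcomes or a strengthening of the domination/rank comparison. In fairness, the paper's own proof asserts the contradiction at exactly this juncture without supplying the argument, so you have correctly located the crux rather than missed it --- but as written your proposal does not constitute a complete proof.
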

\begin{proof}
    By contradiction. 
    Suppose there is $\pi_1'$ such that $(\pi_1', \pi_2)$ strictly dominates $(\pi_1, \pi_2)$ but $\maxRank_1(\pi_1, \pi_2) = \min \{\maxRank_1(\pi_1', \pi_2) \mid \pi_1' \in \Pi_1\}$.
    Then, by \refDef{def:dominating-stategy} and \refLma{lma:rank-and-minimality}, there must exists a state $v' \in \minimal(\Omega(\pi_1', \pi_2'))$ such that $v' \succ_1 v$ for some $v \in \minimal(\Omega(\pi_1, \pi_2'))$.
    In this case, \refProp{prop:rank-comparison}(3) implies that $\rank_1(v') < \rank_1(v)$.
    But we know that $\pi_1$ ensures the smallest possible $\maxRank_1$, which results in a contradiction. 
    Therefore, it must be the case that there is no strategy profile $(\pi_1', \pi_2)$ that strictly dominates $(\pi_1, \pi_2)$.
\end{proof}

The second property formalized by \refLma{lma:undominance} states that any strategy that minimizes the rank of the minimal outcomes cannot be dominated by any other P1 strategy.

Together with \refDef{def:maximal-aswin}, \refLma{lma:undominance} enables us to establish that every \ac{ndaswin} strategy $\pi_1$ of P1 minimizes the maximum rank achieved by P1 under $\pi_1$, regardless of the strategy chosen by P2.

\begin{theorem}
    \label{thm:maximal-aswin}
    Given any P2 strategy $\pi_2$, every P1 strategy $\pi_1$ that satisfies $\maxRank_1(\pi_1, \pi_2) = \min \{ \maxRank_1(\pi_1', \pi_2) \mid \pi_1' \in \Pi_1\}$ is a P1's \ac{ndaswin} strategy.
\end{theorem}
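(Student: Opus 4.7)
My plan is to deduce this theorem as an almost immediate consequence of Lemma 2 (undominance) combined with Definition 2 of \ac{ndaswin}. The hypothesis on $\pi_1$ is exactly the assumption of Lemma 2, and the conclusion of Lemma 2 is, modulo quantifier handling, precisely the defining property of a \ac{ndaswin} strategy, so the proof reduces to threading the quantifier over $\pi_2$ correctly.

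Concretely, I would first fix an arbitrary P2 strategy $\pi_2$ and record that the hypothesis $\maxRank_1(\pi_1, \pi_2) = \min\{\maxRank_1(\pi_1', \pi_2) \mid \pi_1' \in \Pi_1\}$ holds for this $\pi_2$. Second, I would apply Lemma 2 to conclude that there is no P1 strategy $\pi_1'$ such that $(\pi_1', \pi_2)$ strictly dominates $(\pi_1, \pi_2)$ in the sense of Definition 1. Third, because $\pi_2$ was chosen arbitrarily, this non-existence of a strictly dominating alternative holds for every P2 strategy. Matching this against Definition 2 yields that $\pi_1$ is \ac{ndaswin} for P1, completing the proof.

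I do not anticipate any genuine obstacle. The single care-point worth flagging is quantifier discipline: the hypothesis of the theorem, the hypothesis of Lemma 2, and the defining clause of \ac{ndaswin} all quantify universally over $\pi_2$, while Lemma 2 is internally phrased with a generic $\pi_2$ in its conclusion. In the write-up I would therefore prefer to make the quantifier flow explicit in one short sentence (fix $\pi_2$, apply Lemma 2, observe arbitrariness) rather than collapse everything into a one-line citation, since the intermediate step of passing from a per-$\pi_2$ undominance statement to the universal statement required by Definition 2 is precisely what makes the theorem more than a mere restatement of the lemma.
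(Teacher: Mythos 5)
Your proposal is correct and matches the paper exactly: the paper gives no separate proof of \refThm{thm:maximal-aswin}, presenting it as an immediate consequence of \refLma{lma:undominance} together with \refDef{def:maximal-aswin}, which is precisely the argument you spell out. Your explicit handling of the quantifier over $\pi_2$ (fix it, apply the lemma, note arbitrariness) is a reasonable clarification but not a departure from the paper's route.
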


Note that, in every game, P1 has a \ac{ndaswin} strategy. 
Suppose the rank of initial state $v_0$ is $k$ and that P2 has a strategy to prevent P1 from achieving any outcome with rank smaller than $k$. 
Then, by setting $\pi_1(v_0)$ to undefined, P1 can ensure that it achieves no worse outcome than $v_0$. 
In other words, if P1 does not have a strategy to almost-surely achieve a better than $v_0$, then it has a \ac{ndaswin} strategy to remain at $v_0$.

\begin{algorithm}[tb]
\caption{\ac{ndaswin} Strategy.}
\label{alg:maximal-aswin}
\textbf{Input}: $H$: Product game.\\
\textbf{Output}: $\pi$: \ac{ndaswin} strategy.
\begin{algorithmic}[1] 
    \FOR{$k = 0 \ldots k_1^{\max}$}
        \STATE $Y_k = \{v \in V \mid \rank_1(v) \leq k\}$.
        \STATE $V_k = \aswin_1(Y_k)$.
        \IF{$v_0 \in V_k$}
            \STATE Let $\pi_1$ be P1's almost-sure winning strategy to visit $V_k$ from $v_0$.
            \RETURN $\pi_1$.
        \ENDIF
    \ENDFOR
\end{algorithmic}
\end{algorithm}

\refAlg{alg:maximal-aswin} presents a procedure to compute a \ac{ndaswin} strategy for P1. 
The algorithm iteratively identifies the smallest rank $k$ for which P1 has a strategy to visit some state with rank $k$ or smaller with probability one. 
For this purpose, the procedure iteratively computes the almost-sure winning regions $Y_0, Y_1, \ldots, Y_j$ until $v_0$ is included in $\aswin_1(Y_j)$, for $j = 0, \ldots, k_1^{\max}$. 
Assuming $k$ is the smallest integer for which $v_0 \in Y_k$, the following result establishes that the strategy returned by \refAlg{alg:maximal-aswin} is indeed a \ac{ndaswin} strategy for P1.

\begin{theorem}
	\label{thm:max-swin}
	Every P1 strategy $\pi_1$ returned by \refAlg{alg:maximal-aswin} is a \ac{ndaswin} strategy for P1.
\end{theorem}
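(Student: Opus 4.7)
My plan is to combine the termination condition of \refAlg{alg:maximal-aswin} with \refThm{thm:maximal-aswin}. Let $k^*$ denote the smallest index at which the for-loop exits, so that $v_0 \in \aswin_1(Y_{k^*})$ and $v_0 \notin \aswin_1(Y_k)$ for every $k < k^*$. By construction, the returned $\pi_1$ is P1's proper almost-sure winning strategy from $v_0$ to reach $Y_{k^*} = \{v \in V \mid \rank_1(v) \leq k^*\}$.

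The first step is an upper bound: for every $\pi_2 \in \Pi_2$, $\maxRank_1(\pi_1, \pi_2) \leq k^*$. Since $\pi_1$ is proper and almost-surely reaches $Y_{k^*}$ against any $\pi_2$, every terminating play under $(\pi_1, \pi_2)$ halts in $Y_{k^*}$; equivalently, $\Omega(\pi_1, \pi_2) \subseteq Y_{k^*}$. By \refDef{def:rank}, every state in $Y_{k^*}$ has rank at most $k^*$, giving the bound.

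The second step is the matching lower bound, which I would argue by contraposition. Suppose, toward contradiction, that some $\pi_1' \in \Pi_1$ satisfied $\maxRank_1(\pi_1', \pi_2) \leq k^*-1$ for every $\pi_2$. Then $\Omega(\pi_1', \pi_2) \subseteq Y_{k^*-1}$ for every $\pi_2$, so $\pi_1'$ is an almost-sure winning strategy from $v_0$ to $Y_{k^*-1}$, placing $v_0 \in \aswin_1(Y_{k^*-1})$ and contradicting the minimality of $k^*$ in the algorithm. Thus $\pi_1$ attains the minimax value $k^*$ of $\maxRank_1(\cdot, \cdot)$.

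Combining the bounds, $\pi_1$ meets the hypothesis of \refThm{thm:maximal-aswin}, which then concludes that $\pi_1$ is a \ac{ndaswin} strategy for P1. The main obstacle is the pointwise condition in the hypothesis of \refThm{thm:maximal-aswin} (and \refLma{lma:undominance}): it demands the minimum of $\maxRank_1(\cdot, \pi_2)$ for each fixed $\pi_2$, not merely the minimax value over $\pi_2$. Reconciling this relies on the positional determinacy of almost-sure reachability in stochastic games---the almost-sure winning strategy for $\aswin_1(Y_{k^*})$ can be chosen opponent-independent, so that it simultaneously serves as a pointwise best response for every $\pi_2$ and as the minimax optimizer with value $k^*$ established above.
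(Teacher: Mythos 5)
Your first two steps are sound and track what the algorithm actually guarantees: since the returned $\pi_1$ almost-surely drives every play into $Y_{k^*}$, you get $\Omega(\pi_1,\pi_2)\subseteq Y_{k^*}$ and hence $\maxRank_1(\pi_1,\pi_2)\le k^*$ for every $\pi_2$; and the minimality of $k^*$ rules out any $\pi_1'$ with $\maxRank_1(\pi_1',\pi_2)\le k^*-1$ \emph{uniformly over all} $\pi_2$. The genuine gap is exactly at the step you flagged, and the repair you propose does not close it. The hypothesis of \refThm{thm:maximal-aswin} (via \refLma{lma:undominance}) is pointwise: for each fixed $\pi_2$, the strategy must attain $\min\{\maxRank_1(\pi_1',\pi_2)\mid \pi_1'\in\Pi_1\}$. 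What you have proved is only the minimax property, which is strictly weaker: against a non-adversarial $\pi_2$, some $\pi_1'$ may achieve $\maxRank_1(\pi_1',\pi_2)=0$ while the returned robust strategy, which merely aims for $Y_{k^*}$, can still terminate in a rank-$k^*$ state with positive probability. Positional determinacy of almost-sure reachability says the winning strategy can be chosen uniformly against all opponents; it does not say that this uniform strategy is rank-optimal against each opponent individually, which is what the pointwise condition demands. So the bridge from your minimax bound to the hypothesis of \refThm{thm:maximal-aswin} is missing, and I do not see how to supply it by determinacy alone.

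For comparison, the paper does not route through \refThm{thm:maximal-aswin} at all. It argues directly by contradiction from \refDef{def:maximal-aswin}: if some $(\pi_1',\pi_2)$ strictly dominated $(\pi_1,\pi_2)$, then by \refLma{lma:rank-and-minimality} and \refProp{prop:rank-comparison}(3) the worst-case outcome of $(\pi_1',\pi_2)$ would have strictly smaller rank than that of $(\pi_1,\pi_2)$, contradicting the minimality of $k$ in \refAlg{alg:maximal-aswin} --- essentially replaying the argument of \refLma{lma:undominance} inline. That route confronts the same per-$\pi_2$ versus uniform-in-$\pi_2$ tension (domination against one particular $\pi_2$ does not by itself yield a strategy that is almost-surely winning for $Y_{k^*-1}$ against all opponents), so your instinct that something nontrivial must be reconciled here is correct; but as it stands, your specific appeal to positional determinacy is not a valid way to do it.
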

\begin{proof}
    By contradiction. 
    Suppose $\pi_1$ is not a \ac{ndaswin} strategy for P1. 
    Then, there exist strategies $\pi_1' \in \Pi_1$ and $\pi_2 \in \Pi_2$ such that
    $(\pi_1', \pi_2)$ strictly dominates $(\pi_1, \pi_2)$, which implies the existence of states $v' \in \max \{\rank_1(v) \mid v \in \Omega(\pi_1', \pi_2)\}$ and $v \in \max \{\rank_1(v) \mid v \in \Omega(\pi_1, \pi_2)\}$ such that $\rank_1(v') < \rank_1(v)$. 
    This means that every state in $\Omega(\pi_1', \pi_2)$ has a rank strictly smaller than $k$. 
    This observation contradicts our hypothesis because \refAlg{alg:maximal-aswin} returns a strategy that almost-surely visits $Y_k$ for the smallest value of $k$.  
    This concludes our proof.
\end{proof}

\textbf{Complexity.} The time complexity of the procedure to compute the \ac{ndaswin} strategy scales quadratically in the size of the game and linearly with the maximum rank assigned to any state in $V$. 
This is because the complexity of solving for an almost-sure winning strategy is quadratic \cite{de2007concurrent} and the above procedure invokes almost-sure winning computation at most $k_1^{max}$-times, where $k_1^{\max} = \max \{\rank_1(v) \mid v \in V\}$ is the maximum assigned rank to any state in $V$.

\subsection{Qualitative Nash Equilibrium} 

In this subsection, we show that a pair of \ac{ndaswin} strategies of P1 and P2 is a Nash equilibrium in the stochastic game given qualitative outcomes. 
We first define the Nash equilibrium in a stochastic game and then characterize P2's \ac{ndaswin} strategy in terms of P1's \ac{ndaswin} strategy.

\begin{definition}[Nash Equilibrium]
    \label{def:nash}
    A strategy profile $(\pi_1^*, \pi_2^*)$ is a Nash equilibrium in $H$ if and only if the following conditions hold:
    \begin{align*}
        \forall \pi_1: 
        \minimal(\Omega(\pi_1, \pi_2^*), \succeq_1) 
        \not\succ_1 
        \minimal(\Omega(\pi_1^*, \pi_2^*), \succeq_1) \\
        \forall \pi_2: 
        \minimal(\Omega(\pi_1^*, \pi_2), \succeq_2) 
        \not\succ_2 
        \minimal(\Omega(\pi_1^*, \pi_2^*), \succeq_2) 
    \end{align*}
\end{definition}

\refDef{def:nash} represents a risk-averse interpretation of Nash equilibrium since it is defined based on the worst-case outcome.
Intuitively, a Nash strategy profile guarantees that neither player has a strategy following which the player can achieve a better least preferred path than any least preferred path possible under the Nash strategy profile.
Given the combinative nature of preferences over temporal goals, there are multiple ways to define Nash equilibrium (c.f. the eight semantics of preference logic \cite{van2005preference}.)  
Given the stochastic nature of our problem, where each strategy profile defines a set of possible paths, we adopt the interpretation that each player evaluates the quality of a strategy profile based on the worst-case outcome possible under that profile.

We now characterize P2's \ac{ndaswin} strategy. 
Recall that $\calE_2$ represents P2's preference relation on $V$. Since $\calE_2$ is adversarial to $\calE_1$, for any two states $u, v \in V$, we have $u \succeq_{\calE_2} v$ if and only if $v \succeq_{\calE_1} u$. 

Let $\rank_2(v)$ denote the rank of state $v$ under $\calE_2$, analogous to P1's rank function $\rank_1$. Define $k_2^{\max} \triangleq \max \{\rank_2(v) \mid v \in V\}$ as the highest rank assigned to any state in $V$ under $\calE_2$.

The key insight behind the characterization of P2's \ac{ndaswin} strategy is that the sum of P1 and P2 ranks of any state in $V$ is constant.

\begin{lemma}
	\label{lma:opposite.constant-sum}
	Given any state $v \in V$, $\rank_1(v) + \rank_2(v) = k_1^{\max} = k_2^{\max}$.
\end{lemma}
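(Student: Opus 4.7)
The plan is to exploit the duality from the adversarial preference assumption: since $\calE_2$ is the reverse of $\calE_1$, we have $\maximal(U, \calE_2) = \minimal(U, \calE_1)$ for every $U \subseteq V$. Write $Z_k$ for the $k$-th layer of the $\calE_1$-peeling from \refDef{def:rank} and $W_k$ for the analogous layer under $\calE_2$, so that $v \in Z_k$ iff $\rank_1(v) = k$ and $v \in W_k$ iff $\rank_2(v) = k$.

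I would prove by induction on $k$ the identity $Z_k = W_{k_1^{\max} - k}$, from which the lemma follows immediately: $\rank_1(v) + \rank_2(v) = k_1^{\max}$, with the equality $k_1^{\max} = k_2^{\max}$ obtained by swapping the roles of $\calE_1$ and $\calE_2$. For the base case, $Z_0 = \maximal(V, \calE_1)$ consists of P1's most preferred states---equivalently, P2's least preferred---which survive every iteration of the $\calE_2$-peeling and thus form the last layer $W_{k_1^{\max}}$. For the inductive step, assuming $Z_j = W_{k_1^{\max}-j}$ for all $j < k$, the residuals $V \setminus \bigcup_{j<k} Z_j$ and $V \setminus \bigcup_{j > k_1^{\max}-k} W_j$ coincide; calling this common set $R_k$, I would identify the top layer $Z_k = \maximal(R_k, \calE_1)$ with the corresponding $\calE_2$-layer $W_{k_1^{\max}-k} = \maximal(R_k, \calE_2) = \minimal(R_k, \calE_1)$ via the duality applied to $R_k$.

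The principal obstacle is precisely this matching step: in general, $\maximal(R_k, \calE_1)$ and $\minimal(R_k, \calE_1)$ can differ when $R_k$ contains states of asymmetric chain depth, so the clean correspondence $Z_k = W_{k_1^{\max}-k}$ is not automatic. The proof therefore depends on a gradedness property of the preorder $\calE_1$ on $V$---every state lies on a maximal chain of the same length $k_1^{\max}+1$---so that successive residuals peel uniformly from both ends. Establishing this property, most plausibly by tracing the structure of the preference automata $\pa_1$ and $\pa_2$ used to induce $\calE_1$ and $\calE_2$, is where I expect the bulk of the technical effort to lie.
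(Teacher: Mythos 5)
Your route is the same one the paper takes: peel $V$ into layers $Z_k$ under $\calE_1$ and $W_k$ under $\calE_2$, and try to match $Z_k$ with $W_{k_1^{\max}-k}$ using the duality $\maximal(U,\calE_2)=\minimal(U,\calE_1)$. The obstacle you flag --- that $\maximal(R_k,\calE_1)$ and $\minimal(R_k,\calE_1)$ need not coincide unless the preorder is graded --- is precisely the step the paper's own proof elides. The paper asserts that every $\calE_1$-minimal state of $V$ lies in $Z_{k_1^{\max}}$ and that every state of $Z_{k_1^{\max}}$ is $\calE_2$-maximal; neither claim holds for a general preorder. Concretely, take $V=\{a,b,c\}$ with $a\succ_{\calE_1}b$ and $c$ incomparable to both. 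Then $Z_0=\{a,c\}$, $Z_1=\{b\}$, so $k_1^{\max}=1$; under the reversed order $W_0=\{b,c\}$, $W_1=\{a\}$, so $\rank_1(c)+\rank_2(c)=0\neq 1$. The same failure occurs whenever two maximal chains have different lengths (e.g., $a\succ b\succ c$ alongside $d\succ e$). Since the paper's framework explicitly allows incomparability --- its own running example has incomparable automaton states --- nothing in the construction of $\calE_1$ from the preference automaton rules these configurations out.

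So the situation is: your proposal is honest about the missing ingredient, but that ingredient is not a technicality you failed to work out --- the statement is simply false for arbitrary preorders on $V$, and it becomes true only under the uniform-chain-length (gradedness) hypothesis you describe. The paper does not establish that hypothesis either; it silently assumes it in the sentences ``the minimal states in $V$ under $\calE_1$ are all included in $Z_{k_1^{\max}}$'' and ``every state in $Y_j$ is a state with rank $k_1^{\max}-j$.'' You therefore cannot close your gap by appealing to the paper's argument. To salvage the lemma one must either add gradedness as an explicit assumption on the induced preorder, or weaken the conclusion (e.g., to $\rank_1(v)+\rank_2(v)\leq k_1^{\max}$, which does follow from your duality argument) and check that the weaker form still suffices for Lemma~\ref{lma:opposite.p2-max-aswin} and Theorem~\ref{thm:opposite.nash-equilibrium}.
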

\begin{proof}
	We will show that for any set $U \subseteq V$, a maximal state in $U$ under $\calE_1$ is the minimal element in $U$ under $\calE_2$ (recall that a state $v \in V$ is minimal under $\calE_1$ if there is no state $u \in V$ such that $v \succeq_{\calE_1} u$).
	
	First, we note that the minimal states in $V$ under $\calE_1$ are all included in the set $Z_{k_1^{\max}}$.
	This is because $k_1^{\max}$ is the maximum rank assigned to any state in $V$ and if there were a state $u \in V$ which was minimal but not included in $Z_{k_1^{\max}}$, then it must have a rank greater than ${k_1^{\max}}$, by \refProp{prop:rank-comparison}.
	
	Now, consider a state $v \in Z_{k_1^{\max}}$.
	We will show that $v$ is a maximal state under $\calE_2$, \ie, $\rank_2(v) = 0$.
	For this, we observe that every state $u \in Z_j$ for any $j < {k_1^{\max}}$ satisfies $u \succeq_{\calE_1} v$ or $u \parallel_{\calE_1} v$.
	Thus, under the opposite preference relation $\calE_2$ it must satisfy $v \succeq_{\calE_2} u$ or $v \parallel_{\calE_2} u$.
	Since $v$ was a minimal element in $V$ under $\calE_2$, there is no $u$ such that $u \succeq_{\calE_2} v$.
	In other words, $v$ is a maximal element in $V$ under $\calE_2$.
	By definition, $\rank_2(v) = 0$.
	
	It follows that every $v$ such that $\rank_1(v) = {k_1^{\max}}$ has a rank $0$ under $\calE_2$.
	For $j = 0, 1, \ldots$, let $Y_j$, , denote the set of states with rank $j$ under $\calE_2$.
	Using a similar argument, the minimal elements of $\bigcup\limits_{j=0}^k Z_j$ are the maximal elements of the set $V \setminus \bigcup\limits_{j=0}^k Y_j$.
	Therefore, every state in $Y_j$ is a state with rank ${k_1^{\max}} - j$ under $\calE_2$.
	
	From this observation, it follows that $k_1^{\max} = k_2^{\max}$ and $\rank_1(v) + \rank_2(v) = {k_1^{\max}}$.	
\end{proof}

\refLma{lma:opposite.constant-sum} highlights a key property of P2's \ac{ndaswin} strategy; if P1's \ac{ndaswin} strategy ensures a rank of at least $k$, then every \ac{ndaswin} strategy of P2 must visit a rank $k_2^{\max} - k$ state with a positive probability.

\begin{lemma}
	\label{lma:opposite.p2-max-aswin}
	Let $\pi_1^*$ be a \ac{ndaswin} strategy.
	If $\min\limits_{\pi_2 \in \Pi_2} \maxRank_1(\pi_1^*, \pi_2) = k$, then every \ac{ndaswin} strategy $\pi_2^*$ of P2 satisfies $\maxRank_2(\pi_1^*, \pi_2^*) = k_2^{\max} - k$.
\end{lemma}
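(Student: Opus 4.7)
The plan rests on the constant-sum identity of \refLma{lma:opposite.constant-sum}: $\rank_1(v)+\rank_2(v)=k_2^{\max}$ for every $v\in V$. Applied pointwise to $\Omega(\pi_1,\pi_2)$ this gives two dual formulas,
\begin{align*}
\maxRank_1(\pi_1,\pi_2) &= k_2^{\max}-\min\{\rank_2(v)\mid v\in\Omega(\pi_1,\pi_2)\},\\
\maxRank_2(\pi_1,\pi_2) &= k_2^{\max}-\min\{\rank_1(v)\mid v\in\Omega(\pi_1,\pi_2)\},
\end{align*}
which I would record first. The first rewrites the hypothesis as $\max_{\pi_2}\min\{\rank_2(v)\mid v\in\Omega(\pi_1^*,\pi_2)\}=k_2^{\max}-k$, and the second shows that the conclusion is equivalent to $\min\{\rank_1(v)\mid v\in\Omega(\pi_1^*,\pi_2^*)\}=k$.

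Next I would invoke the P2-analog of \refThm{thm:maximal-aswin}: every \ac{ndaswin} strategy $\pi_2^*$ achieves $\maxRank_2(\pi_1^*,\pi_2^*)=\min_{\pi_2}\maxRank_2(\pi_1^*,\pi_2)$. Substituting via the second dual formula, the lemma reduces to proving the $\min$--$\max$ exchange
\[
\max_{\pi_2}\min\{\rank_1(v)\mid v\in\Omega(\pi_1^*,\pi_2)\}=k.
\]

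This exchange identity is the step I expect to require the most care, since it swaps the order of $\min$ and $\max$ on the P1-rank. I would establish the two inequalities separately. For $\le k$, any $\tilde\pi_2$ attaining $\min_{\pi_2}\maxRank_1(\pi_1^*,\pi_2)=k$ also satisfies $\min\{\rank_1(v)\mid v\in\Omega(\pi_1^*,\tilde\pi_2)\}\le\maxRank_1(\pi_1^*,\tilde\pi_2)=k$, and the construction of $\pi_1^*$ via \refAlg{alg:maximal-aswin} ensures $\Omega(\pi_1^*,\pi_2)\subseteq Y_k=\{v\mid\rank_1(v)\le k\}$ almost surely for every $\pi_2$, so no P2 strategy can push this minimum above $k$. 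For $\ge k$, I would argue by contradiction: if $\max_{\pi_2}\min\{\rank_1(v)\mid v\in\Omega(\pi_1^*,\pi_2)\}<k$, then every $\pi_2$ admits an outcome of rank strictly less than $k$; using \refLma{lma:rank-and-minimality} together with the adversarial lift of $\calE_2$, this would let me refine $\pi_1^*$ into a P1 strategy confining outcomes uniformly below rank $k$, contradicting the minimality of $k$ in the algorithmic construction of $\pi_1^*$. Substituting the value $k$ back into the reduction from the previous paragraph yields $\maxRank_2(\pi_1^*,\pi_2^*)=k_2^{\max}-k$, as required.
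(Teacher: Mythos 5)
Your overall route matches the paper's: both rest on \refLma{lma:opposite.constant-sum} and on identifying \ac{ndaswin} strategies with $\maxRank$-minimizers, and your dual formulas $\maxRank_2(\pi_1,\pi_2)=k_2^{\max}-\min\{\rank_1(v)\mid v\in\Omega(\pi_1,\pi_2)\}$ (and its mirror) are correct and make the structure of the argument more transparent than the paper's version. You have also correctly isolated the crux: the conclusion is equivalent to exhibiting a P2 strategy under which \emph{every} outcome has $\rank_1$ exactly $k$.

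That crux, however, is where your argument has a genuine gap. For the direction $\max_{\pi_2}\min\{\rank_1(v)\mid v\in\Omega(\pi_1^*,\pi_2)\}\ge k$ you assume, for contradiction, that every $\pi_2$ admits \emph{some} outcome of rank $<k$, and claim this lets you ``refine $\pi_1^*$'' into a strategy confining outcomes uniformly below rank $k$. That is a quantifier swap --- from ``$\forall\pi_2\,\exists$ an outcome of rank $<k$'' to ``$\exists\pi_1\,\forall\pi_2$, \emph{all} outcomes have rank $<k$'' --- and nothing in \refLma{lma:rank-and-minimality} or the adversarial lift of $\calE_2$ supplies it; it is a determinacy-type statement that would itself need a proof. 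A second, smaller gap: you invoke the ``P2-analog of \refThm{thm:maximal-aswin}'' in its converse form (every \ac{ndaswin} strategy minimizes $\maxRank$), but the theorem only states the sufficient direction. To be fair, the paper's own proof is no more complete on either point: it establishes only that some outcome under $\widehat\pi_2$ has $\rank_1$ equal to $k$, which bounds $\min\{\rank_1(v)\mid v\in\Omega(\pi_1^*,\widehat\pi_2)\}$ from \emph{above} rather than from below, and then asserts $\maxRank_2(\pi_1^*,\widehat\pi_2)=k_2^{\max}-k$ without showing that all outcomes attain rank $k$; it likewise treats the converse of \refThm{thm:maximal-aswin} as given. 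So your proposal faithfully reproduces the paper's argument at a higher level of explicitness, and in doing so exposes --- rather than fills --- the step both proofs are missing.
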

\begin{proof}
    When the smallest possible $\maxRank_1(\pi_1^*, \pi_2)$ when P1 follows its \ac{ndaswin} strategy $\pi_1^*$ is equal to $k$, \refThm{thm:maximal-aswin} implies the existence of a P2 strategy $\widehat \pi_2$ such that, for some $v \in \Omega(\pi_1^*, \widehat \pi_2)$, $\rank_1(v) = k$. Clearly, by definition of $\maxRank_1$, we have $\maxRank_1(\pi_1^*, \pi_2) \leq k$ for any P2 strategy $\pi_2$. 
    Together with \refLma{lma:opposite.constant-sum}, this observation means that $k_2^{\max} - k$ is the smallest $\maxRank_2$ that P2 can achieve when P1 follows $\pi_1^*$. Therefore, $\widehat \pi_2$ must be a \ac{ndaswin} strategy for P2 and that, for any such P2 strategy $\widehat \pi_2$, we have $\maxRank_2(\pi_1^*, \pi_2^*) = k_2^{\max} - k$.
\end{proof}

\refLma{lma:opposite.p2-max-aswin} establishes that when P1 and P2 both follow their \ac{ndaswin} strategies, neither player has a strategy to achieve a better worst-case outcome than that ensured by its \ac{ndaswin} strategy. 
Together with \refDef{def:nash}, it follows immediately that a strategy profile consisting of \ac{ndaswin} strategies is a Nash equilibrium.

\begin{theorem}
	\label{thm:opposite.nash-equilibrium}
        Every strategy profile consisting of \ac{ndaswin} strategies constitutes a Nash equilibrium in $H$.
\end{theorem}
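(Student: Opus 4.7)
The plan is to prove the theorem by showing that, starting from a profile $(\pi_1^*, \pi_2^*)$ of \ac{ndaswin} strategies, neither player has a unilateral deviation that strictly dominates the profile in the sense underlying Definition \ref{def:nash}. Lemmas \ref{lma:opposite.constant-sum} and \ref{lma:opposite.p2-max-aswin} will be the workhorses: they convert the adversarial preference structure into a tight coupling between the rank values the two players can enforce at equilibrium.

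I first set up the rank bounds. Let $k = \min_{\pi_2 \in \Pi_2} \maxRank_1(\pi_1^*, \pi_2)$. Because $\pi_1^*$ is \ac{ndaswin}, the almost-sure-winning construction behind Algorithm \ref{alg:maximal-aswin} and Theorem \ref{thm:max-swin} guarantees $\maxRank_1(\pi_1^*, \pi_2) \le k$ for every $\pi_2$. Applying Lemma \ref{lma:opposite.p2-max-aswin} to $(\pi_1^*, \pi_2^*)$ yields $\maxRank_2(\pi_1^*, \pi_2^*) = k_2^{\max} - k$, and the symmetric version of Lemma \ref{lma:opposite.p2-max-aswin} applied to $\pi_2^*$ gives $\maxRank_2(\pi_1, \pi_2^*) \le k_2^{\max} - k$ for every P1 strategy $\pi_1$. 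Lemma \ref{lma:opposite.constant-sum} then translates this bound on $\rank_2$ into $\rank_1(v) \ge k$ for every outcome $v \in \Omega(\pi_1, \pi_2^*)$.

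Next, I dispatch the first Nash clause (no profitable P1 deviation) by contradiction. Suppose there exists $\pi_1$ with $\minimal(\Omega(\pi_1, \pi_2^*), \succeq_1) \succ_1 \minimal(\Omega(\pi_1^*, \pi_2^*), \succeq_1)$. Reading the set-level strict preference through Definition \ref{def:dominating-stategy} extracts witnesses $v \in \minimal(\Omega(\pi_1, \pi_2^*))$ and $v^* \in \minimal(\Omega(\pi_1^*, \pi_2^*))$ with $v \succ_1 v^*$; Proposition \ref{prop:rank-comparison}(3) then gives $\rank_1(v) < \rank_1(v^*)$. But the bounds from the previous paragraph force $\rank_1(v^*) \le \maxRank_1(\pi_1^*, \pi_2^*) \le k$ and $\rank_1(v) \ge k$, yielding the contradictory chain $\rank_1(v) \ge k \ge \rank_1(v^*)$. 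The second Nash clause follows by the mirror-image argument, swapping the roles of P1 and P2 and using the adversarial relation between $\calE_1$ and $\calE_2$ to invoke Lemma \ref{lma:opposite.constant-sum} in the opposite direction.

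The main subtlety I anticipate lies in the set-level interpretation of $\succ_1$ that appears in Definition \ref{def:nash}: it must be read through the strict-dominance semantics of Definition \ref{def:dominating-stategy} so that a concrete witness pair $(v, v^*)$ can be extracted and fed into the rank comparison. Once that semantic bridge is pinned down, the remainder is essentially an algebraic consequence of the constant-sum structure of P1 and P2 ranks established in Lemma \ref{lma:opposite.constant-sum}, combined with the almost-sure reachability guarantee that characterises every \ac{ndaswin} strategy.
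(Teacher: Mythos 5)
Your proof is correct and takes essentially the same route as the paper: the paper obtains the theorem as an immediate consequence of \refLma{lma:opposite.p2-max-aswin} (built on \refLma{lma:opposite.constant-sum}) combined with \refDef{def:nash}, which is exactly the constant-sum rank-bound argument you spell out. Your write-up simply makes explicit the witness extraction via \refDef{def:dominating-stategy} and the rank contradiction via \refProp{prop:rank-comparison}(3) that the paper compresses into ``it follows immediately.''
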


\begin{figure}[tb]
	\centering
        \includegraphics[scale=0.3]{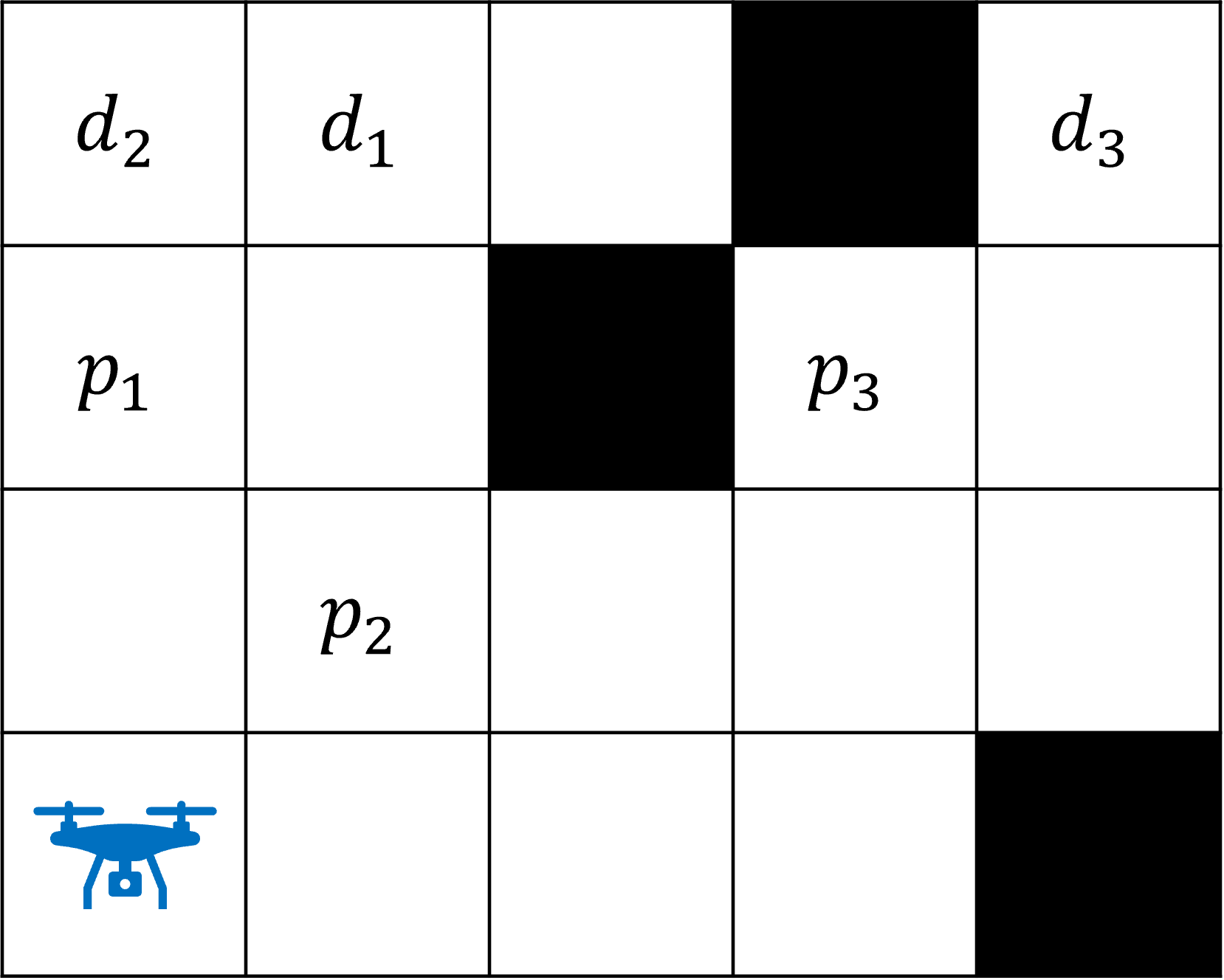}
        \caption{Drone delivery in hostile environment.}
	\label{fig:gw-sim}
\end{figure}

\section{Experiment}
\label{sec:experiment}
We illustrate our theoretical results using a drone delivery scenario in a hostile environment, modeled as a $5 \times 5$ stochastic gridworld with two drones, A and B, as shown in \refFig{fig:gw-sim}. This example demonstrates how trustworthy AI systems can navigate hostile settings, aligning with specified preferences while ensuring safety constraints are met.

In this scenario, drone A must deliver three packages from locations $p_1$, $p_2$, and $p_3$ to destinations $d_1$, $d_2$, and $d_3$, while adhering to its delivery preferences. Drone B's objective is to obstruct A from achieving a highly preferred outcome. Both drones can move in four directions (\texttt{N, E, S, W}) with an $0.8$ probability of reaching the intended cell and a $0.2$ probability of landing in an adjacent valid cell. Drones pick up packages by entering the corresponding cells and can attack each other if they occupy the same cell. Black cells indicate no-go zones. All deliveries must be completed within 10 rounds.

The central design question we ask is: Given that drone B starts at $(3, 0)$ and both drones act rationally, which starting cell should drone A choose to best satisfy its preferences?

\begin{figure}[tb]
	\centering
	\includegraphics[scale=0.25]{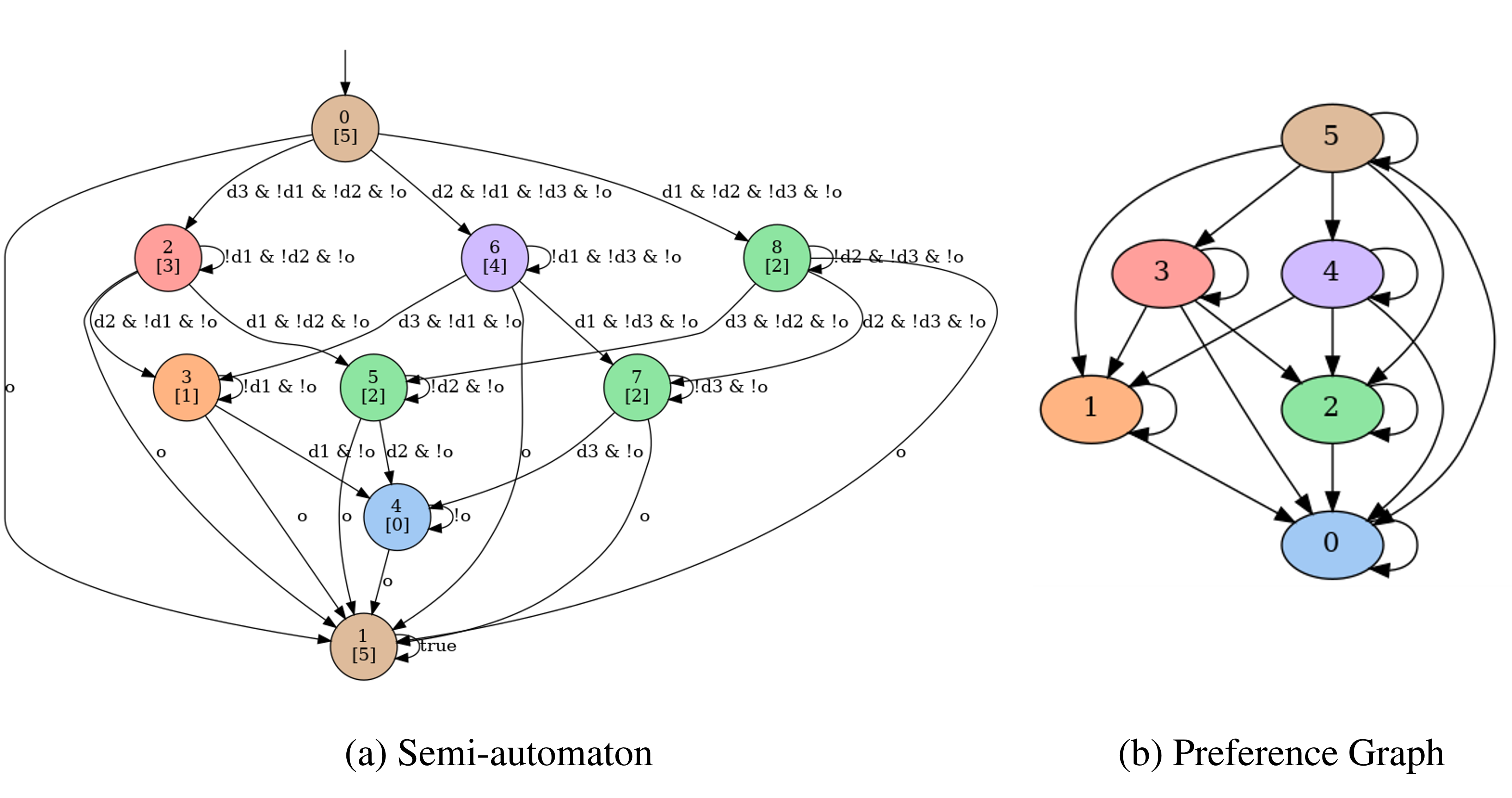}
	\caption{Preference automaton for the relation defined by $\varphi_1 \strictpref \varphi_2$, $\varphi_1 \strictpref \varphi_3$, $\varphi_4 \strictpref \varphi_2$, and $\varphi_4 \strictpref \varphi_3$.}
	\label{fig:opposite.pref-aut}
\end{figure}

Suppose that drone A prefers delivering the package 1 over delivering only package 2 or only package 3, and delivering both packages 2 and 3 over delivering only package 2 or only package 3.
This specification is formalized using four \ac{ltlf} formulas: $\varphi_i = \Eventually d_i \land \Always \neg o$ for $i = 1, 2, 3$ and $\varphi_4 = \Eventually d_2 \land \Eventually d_3 \land \Always \neg o$, where the preorder $\weakpref$ contains represents four atomic preference relations, $\varphi_1 \strictpref \varphi_2$, $ \varphi_1 \strictpref \varphi_3$, $\varphi_4 \strictpref \varphi_2$ and  $\varphi_4 \strictpref \varphi_3$. We assume that drone A prefers delivering at least one package over delivering none.
The preferences of drone B are completely opposite to those of A. 
Observe that satisfying $\varphi_2$ is incomparable to satisfying $\varphi_3$, and satisfying $\varphi_1$ is incomparable to satisfying $\varphi_2 \land \varphi_3$.
Note that treating incomparability as indifference can lead to undesirable outcomes, such as excluding meaningful Nash equilibria \cite{bade2005nash}.

\refFig{fig:opposite.pref-aut} depicts P1's preference automaton. The semi-automaton tracks the progress made towards satisfaction of $\varphi_1 \ldots \varphi_4$ and the preference graph encodes the comparison between the states of semi-automaton. For example, when drone A delivers packages 1 and 2, the semi-automaton state transitions from $0$ to $8$ and then to $7$. Similarly, when it delivers only package 3, the semi-automaton state is $2$. Since there exists an edge from node $3$ (red) to $2$ (green) in preference graph, we determine that delivering packages 1 and 2 is strictly preferred over delivering only package 3.

\begin{figure}[tb]
	\centering
	\includegraphics[scale=0.4]{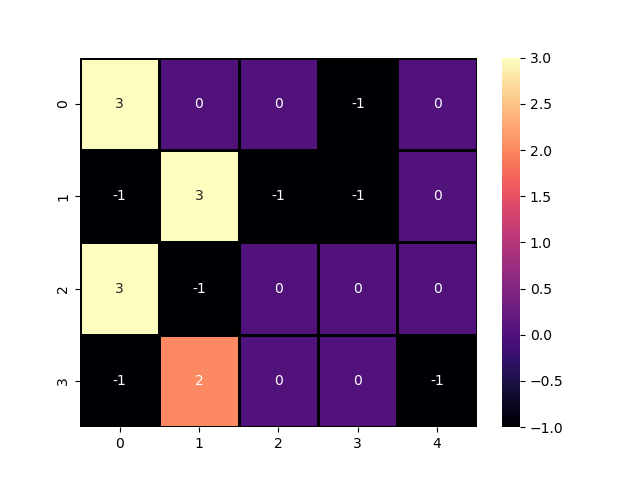}
	\caption{The smallest rank achievable by drone A by following a \ac{ndaswin} strategy.}
	\label{fig:opposite.ranks}
\end{figure}

\refFig{fig:opposite.ranks} shows in each cell the smallest rank that drone A can guarantee to achieve regardless of the strategy employed by drone B, when it starts from that cell.
The invalid starting locations for A are assigned rank $-1$. 
For instance, the value $3$ at cells $(0, 1)$, $(0, 3)$, or $(1, 2)$ denotes that drone A cannot deliver any packages if it starts from any of these cells.
Specifically, in this case, drone B has a strategy to prevent A from either picking or dropping packages. 
For instance, if drone A starts at the cell $(0, 1)$, then drone B has a strategy to prevent drone A from picking up any package.
This is because B can reach the cells labeled $p_1, p_2$ and $p_3$ before drone A can reach them, and A cannot enter the cell with B since B can use the $\mathtt{attack}$ action to disable it.
However, when B cannot prevent A from picking up any of the three packages, A can enforce a rank $0$ outcome against every possible strategy of B. 
Therefore, we conclude that drone A must start at any cell with rank $0$ to achieve the best possible outcome for itself.

\section{Conclusion}
In this paper, we proposed a novel automata-theoretic approach to synthesizing preference satisfying strategies in a two-player stochastic game with adversarial, incomplete preferences. 
We introduced the concept of non-dominated almost-sure winning strategies in two-player stochastic games, which provably guarantee robustness against adversarial actions while still remaining aligned with the specified, potentially incomplete, human preferences.
By utilizing \ac{ltlf}, we effectively modeled the complexities of human preferences, where some outcomes may remain unranked or incomparable. Our results demonstrated that these strategies lead to Nash equilibria, ensuring stable and preference-aligned outcomes in the game.

While this work contributes to the state-of-the-art in game theory and formal methods, it also supports the broader effort of developing trustworthy AI systems capable of making strategic decisions that align with human values in complex, real-world settings. We believe these insights can pave the way for further advancements in preference-aligned AI, offering new avenues for research in robust AI decision-making under uncertainty.

\section*{Acknowledgements}
This work is partially supported by the Air Force Office of Scientific Research under award number FA9550-21-1-0085, Army Research Office grant under award number W911NF-23-1-0317, and Office of Naval Research grant under award number N00014-24-1-2797. 

\bibliography{aaai25}

\begin{thebibliography}{32}
\providecommand{\natexlab}[1]{#1}

\bibitem[{Bade(2005)}]{bade2005nash}
Bade, S. 2005.
\newblock Nash equilibrium in games with incomplete preferences.
\newblock \emph{Economic Theory}, 26: 309--332.

\bibitem[{Baier and Katoen(2008)}]{baier2008principles}
Baier, C.; and Katoen, J.-P. 2008.
\newblock \emph{Principles of model checking}.
\newblock MIT press.

\bibitem[{Barbera and Pattanaik(1984)}]{barbera1984extending}
Barbera, S.; and Pattanaik, P.~K. 1984.
\newblock Extending an order on a set to the power set: some remarks on Kannai and Peleg's approach.
\newblock \emph{Journal of Economic Theory}, 32(1): 185--191.

\bibitem[{Blass et~al.(2006)Blass, Gurevich, Nachmanson, and Veanes}]{blass2006play}
Blass, A.; Gurevich, Y.; Nachmanson, L.; and Veanes, M. 2006.
\newblock Play to test.
\newblock In \emph{Formal Approaches to Software Testing: 5th International Workshop, FATES 2005, Edinburgh, UK, July 11, 2005, Revised Selected Papers 5}, 32--46. Springer.

\bibitem[{Bosi and Herden(2012)}]{bosi2012continuous}
Bosi, G.; and Herden, G. 2012.
\newblock Continuous multi-utility representations of preorders.
\newblock \emph{Journal of Mathematical Economics}, 48: 212--218.

\bibitem[{Brunello, Montanari, and Reynolds(2019)}]{brunello2019synthesis}
Brunello, A.; Montanari, A.; and Reynolds, M. 2019.
\newblock Synthesis of LTL formulas from natural language texts: State of the art and research directions.
\newblock In \emph{26th International symposium on temporal representation and reasoning (TIME 2019)}. Schloss Dagstuhl-Leibniz-Zentrum fuer Informatik.

\bibitem[{Chatterjee and Henzinger(2012)}]{chatterjee2012survey}
Chatterjee, K.; and Henzinger, T.~A. 2012.
\newblock A survey of stochastic $\omega$-regular games.
\newblock \emph{Journal of Computer and System Sciences}, 78(2): 394--413.

\bibitem[{Chatterjee et~al.(2023)Chatterjee, Katoen, Mohr, Weininger, and Winkler}]{chatterjee2023stochastic}
Chatterjee, K.; Katoen, J.~P.; Mohr, S.; Weininger, M.; and Winkler, T. 2023.
\newblock Stochastic games with lexicographic objectives.
\newblock \emph{Formal Methods in System Design}, 1--41.

\bibitem[{Chen et~al.(2013{\natexlab{a}})Chen, Forejt, Kwiatkowska, Simaitis, and Wiltsche}]{chen2013stochastic}
Chen, T.; Forejt, V.; Kwiatkowska, M.; Simaitis, A.; and Wiltsche, C. 2013{\natexlab{a}}.
\newblock On stochastic games with multiple objectives.
\newblock In \emph{Mathematical Foundations of Computer Science 2013: 38th International Symposium, MFCS 2013, Klosterneuburg, Austria, August 26-30, 2013. Proceedings 38}, 266--277. Springer.

\bibitem[{Chen et~al.(2013{\natexlab{b}})Chen, Kwiatkowska, Simaitis, and Wiltsche}]{chen2013synthesis}
Chen, T.; Kwiatkowska, M.; Simaitis, A.; and Wiltsche, C. 2013{\natexlab{b}}.
\newblock Synthesis for multi-objective stochastic games: An application to autonomous urban driving.
\newblock In \emph{Quantitative Evaluation of Systems: 10th International Conference, QEST 2013, Buenos Aires, Argentina, August 27-30, 2013. Proceedings 10}, 322--337. Springer.

\bibitem[{Dalrymple et~al.(2024)Dalrymple, Skalse, Bengio, Russell, Tegmark, Seshia, Omohundro, Szegedy, Goldhaber, Ammann et~al.}]{dalrymple2024towards}
Dalrymple, D.; Skalse, J.; Bengio, Y.; Russell, S.; Tegmark, M.; Seshia, S.; Omohundro, S.; Szegedy, C.; Goldhaber, B.; Ammann, N.; et~al. 2024.
\newblock Towards Guaranteed Safe AI: A Framework for Ensuring Robust and Reliable AI Systems.
\newblock \emph{arXiv preprint arXiv:2405.06624}.

\bibitem[{De~Alfaro, Henzinger, and Kupferman(2007)}]{de2007concurrent}
De~Alfaro, L.; Henzinger, T.~A.; and Kupferman, O. 2007.
\newblock Concurrent reachability games.
\newblock \emph{Theoretical computer science}, 386(3): 188--217.

\bibitem[{De~Giacomo and Vardi(2013)}]{de2013linear}
De~Giacomo, G.; and Vardi, M.~Y. 2013.
\newblock Linear temporal logic and linear dynamic logic on finite traces.
\newblock In \emph{IJCAI'13 Proceedings of the Twenty-Third international joint conference on Artificial Intelligence}, 854--860. Association for Computing Machinery.

\bibitem[{Finucane, Jing, and Kress-Gazit(2010)}]{finucane2010ltlmop}
Finucane, C.; Jing, G.; and Kress-Gazit, H. 2010.
\newblock LTLMoP: Experimenting with language, temporal logic and robot control.
\newblock In \emph{2010 IEEE/RSJ International Conference on Intelligent Robots and Systems}, 1988--1993. IEEE.

\bibitem[{Gr{\"a}del, Thomas, and Wilke(2003)}]{gradel2003automata}
Gr{\"a}del, E.; Thomas, W.; and Wilke, T. 2003.
\newblock \emph{Automata, logics, and infinite games: a guide to current research}, volume 2500.
\newblock Springer.

\bibitem[{Hansson and Grüne-Yanoff(2022)}]{sep-preferences}
Hansson, S.~O.; and Grüne-Yanoff, T. 2022.
\newblock {Preferences}.
\newblock In Zalta, E.~N., ed., \emph{The {Stanford} Encyclopedia of Philosophy}. Metaphysics Research Lab, Stanford University, {S}pring 2022 edition.

\bibitem[{Kokkala et~al.(2019)Kokkala, Berg, Virtanen, and Poropudas}]{kokkala2019rationalizable}
Kokkala, J.; Berg, K.; Virtanen, K.; and Poropudas, J. 2019.
\newblock Rationalizable strategies in games with incomplete preferences.
\newblock \emph{Theory and Decision}, 86: 185--204.

\bibitem[{Kulkarni and Fu(2022)}]{kulkarni2022opportunistic}
Kulkarni, A.~N.; and Fu, J. 2022.
\newblock Opportunistic Qualitative Planning in Stochastic Systems with Preferences over Temporal Logic Objectives.
\newblock \emph{arXiv preprint arXiv:2203.13803}.

\bibitem[{Li et~al.(2020)Li, Turrini, Hahn, She, and Zhang}]{li2020probabilistic}
Li, M.; Turrini, A.; Hahn, E.~M.; She, Z.; and Zhang, L. 2020.
\newblock Probabilistic preference planning problem for markov decision processes.
\newblock \emph{IEEE transactions on software engineering}, 48(5): 1545--1559.

\bibitem[{Liu et~al.(2022)Liu, Yang, Schornstein, Liang, Idrees, Tellex, and Shah}]{liu2022lang2ltl}
Liu, J.~X.; Yang, Z.; Schornstein, B.; Liang, S.; Idrees, I.; Tellex, S.; and Shah, A. 2022.
\newblock Lang2ltl: Translating natural language commands to temporal specification with large language models.
\newblock In \emph{Workshop on Language and Robotics at CoRL 2022}.

\bibitem[{Maly(2020)}]{maly2020lifting}
Maly, J. 2020.
\newblock Lifting preferences over alternatives to preferences over sets of alternatives: The complexity of recognizing desirable families of sets.
\newblock In \emph{Proceedings of the AAAI Conference on Artificial Intelligence}, volume~34, 2152--2159.

\bibitem[{Ok(2002)}]{ok2002utility}
Ok, E.~A. 2002.
\newblock Utility Representation of an Incomplete Preference Relation.
\newblock \emph{Journal of Economic Theory}, 104: 429--449.

\bibitem[{Pnueli and Zuck(1993)}]{pnueli1993probabilistic}
Pnueli, A.; and Zuck, L.~D. 1993.
\newblock Probabilistic verification.
\newblock \emph{Information and computation}, 103(1): 1--29.

\bibitem[{Rahmani, Kulkarni, and Fu(2023)}]{rahmani2023probabilistic}
Rahmani, H.; Kulkarni, A.~N.; and Fu, J. 2023.
\newblock Probabilistic planning with partially ordered preferences over temporal goals.
\newblock In \emph{2023 IEEE International Conference on Robotics and Automation (ICRA)}, 5702--5708. IEEE.

\bibitem[{Rahmani, Kulkarni, and Fu(2024)}]{rahmani2024preference}
Rahmani, H.; Kulkarni, A.~N.; and Fu, J. 2024.
\newblock Preference-Based Planning in Stochastic Environments: From Partially-Ordered Temporal Goals to Most Preferred Policies.
\newblock \emph{arXiv preprint arXiv:2403.18212}.

\bibitem[{Sasaki(2019)}]{sasaki2019rationalizability}
Sasaki, Y. 2019.
\newblock Rationalizability in multicriteria games.
\newblock \emph{International Journal of Game Theory}, 48(2): 673--685.

\bibitem[{Sen(1997)}]{sen1997maximization}
Sen, A. 1997.
\newblock Maximization and the Act of Choice.
\newblock \emph{Econometrica: Journal of the Econometric Society}, 745--779.

\bibitem[{Seshia, Sadigh, and Sastry(2022)}]{seshia2022toward}
Seshia, S.~A.; Sadigh, D.; and Sastry, S.~S. 2022.
\newblock Toward verified artificial intelligence.
\newblock \emph{Communications of the ACM}, 65(7): 46--55.

\bibitem[{Tegmark and Omohundro(2023)}]{tegmark2023provably}
Tegmark, M.; and Omohundro, S. 2023.
\newblock Provably safe systems: the only path to controllable AGI.
\newblock \emph{arXiv preprint arXiv:2309.01933}.

\bibitem[{van Benthem, van Otterloo, and johan(2005)}]{van2005preference}
van Benthem, J.; van Otterloo, S.; and johan, O.~R. 2005.
\newblock Preference logic, conditionals and solution concepts in games.
\newblock \emph{Journal of Pharmacology and Experimental Therapeutics}.

\bibitem[{Wing(2021)}]{wing2021trustworthy}
Wing, J.~M. 2021.
\newblock Trustworthy AI.
\newblock \emph{Commun. ACM}, 64(10): 64–71.

\bibitem[{Özgür Evren and Ok(2011)}]{evren2011incomplete}
Özgür Evren; and Ok, E.~A. 2011.
\newblock On the multi-utility representation of preference relations.
\newblock \emph{Journal of Mathematical Economics}, 47: 554--563.

\end{thebibliography}

\end{document}